\newcommand{\tit}{{\tilde{\tau}}}
\newcommand{\mm}{{\mathcal{M}}}
\newcommand{\mn}{{\mathcal{N}}}
\newcommand{\ml}{{\mathcal{L}}}
\numberwithin{equation}{section}
\newtheorem{theorem}{Theorem}[section]
\newtheorem{remark}[theorem]{Remark}
\newtheorem{coro}[theorem]{Corollary}
\newtheorem{proposition}[theorem]{Proposition}
\begin{document}
\title[Cauchy-Jacobi orthogonal polynomials and the discrete CKP equation]{Cauchy-Jacobi orthogonal polynomials and the discrete CKP equation}

\subjclass[2020]{39A36,~15A15}
\date{}

\dedicatory{}

\keywords{Cauchy bi-orthogonal polynomials, discrete CKP equation, $\tau$-function}

\author{Shi-Hao Li}
\address{Department of Mathematics, Sichuan University, Chengdu, 610064, PR China}
\email{shihao.li@scu.edu.cn}
\author{Satoshi Tsujimoto}
\address{Department of Applied Mathematics and Physics, Graduate School of Imformatics, Kyoto University, Yoshida-Honmachi, Kyoto, Japan 606-8501}
\email{tujimoto@i.kyoto-u.ac.jp}
\author{Ryoto Watanabe}
\address{Department of Applied Mathematics and Physics, Graduate School of Imformatics, Kyoto University, Yoshida-Honmachi, Kyoto, Japan 606-8501}
\email{watanabe.ryoto.g31@kyoto-u.jp}
\author{Guo-Fu Yu}
\address{School of Mathematical Sciences, Shanghai Jiaotong University, People's Republic of China; School of Mathematical Sciences, CMA-Shanghai, Shanghai Jiao Tong University, Shanghai 200240}
\email{gfyu@sjtu.edu.cn}

\begin{abstract}
\noindent 
This paper intends to construct discrete spectral transformations for Cauchy-Jacobi orthogonal polynomials, and find its corresponding discrete integrable systems. It turns out that the normalization factor of Cauchy-Jacobi orthogonal polynomials acts as the $\tau$-function of the discrete CKP equation, which has applications in Yang-Baxter equation, integrable geometry, cluster algebra, and so on. 
\end{abstract}

\maketitle

\section{Introduction}
The relation between orthogonal polynomials and integrable systems has been studied for many years, due to its deep connection with conformal field theory, random matrix theory and numerical algorithms, etc. On the one hand, orthogonal polynomials play crucial roles in the characterization of spectral problems, thus making connections with Lax representations for discrete integrable equations. 
On the other hand, new orthogonal polynomials were proposed when solving the spectral problem of  classical integrable systems. Cauchy bi-orthogonal polynomials is one of the examples proposed during the study of Hermite-Pad\'e approximation problem of Degasperis-Procesi peakon equation \cite{lundmark05}. Later on, classical properties of Cauchy bi-polynomials were investigated \cite{bertola09,bertola10}, including its  zeros distribution, four-term recurrence relation, Riemann-Hilbert problem and so on. Besides, such a polynomial is related to random matrix theory. In \cite{bertola14}, the Cauchy two-matrix model with specific Laguerre weight was considered, with limiting behavior expressed in terms of Meijer-G function. To be precise, the inner product of Cauchy-Laguerre weight is defined from $\mathbb{R}[x]\times\mathbb{R}[y]\to\mathbb{R}$, by
\begin{align*}
\langle f(x), g(y)\rangle=\int_{\mathbb{R}_+^2}\frac{x^ay^be^{-x-y}}{x+y}f(x)g(y)dxdy.
\end{align*}
To evaluate the related bi-moments $\langle x^i, y^j\rangle$ with Cauchy-Laguerre weight, we can introduce a one-parameter generalization of the inner product \cite[lemma 2.3]{forrester21}
\begin{align*}
\langle f(x),g(y)\rangle_t=\int_{\mathbb{R}_+^2}\frac{x^ay^be^{-t(x+y)}}{x+y}f(x)g(y)dxdy.
\end{align*}
This inner product plays a dominant role when we connect Cauchy bi-orthogonal polynomials with  Toda-type equation. In \cite{li19}, it has been shown that with $a=b$, the one-parameter deformed Cauchy-Laguerre weight is related to the CKP hierarchy, as well as a semi-discrete Toda equation of CKP type (C-Toda for brevity). 
If $a\neq b$, there is also a related discrete integrable system, with $\tau$-function expressed by an asymmetric Gram determinant \cite{chang21}. Besides, in \cite{miki11}, a full-discrete equation was constructed by introducing a discrete evolution of the inner product, while in \cite{li23}, a non-abelian C-Toda lattice was obtained by using matrix-valued Cauchy bi-orthogonal polynomials. 

In this paper, we mainly consider a deformed Cauchy bi-orthogonal polynomials with Jacobi weight, and its corresponding discrete integrable system. By introducing a two-parameter symmetric inner product
\begin{align*}
\langle f(x),g(y)\rangle_{s,t}=\int_{[0,1]\times[0,1]}\frac{x^sy^s}{x+y}f(x)g(y)\left(
\frac{1-x}{1+x}
\right)^t\left(
\frac{1-y}{1+y}
\right)^t dxdy,\quad s,t\in\mathbb{N},
\end{align*}
we show the corresponding orthogonal polynomials $\{P_n^{s,t}(x)\}_{n\in\mathbb{N}}$ not only have the four-term recurrence relation, but some spectral transformations with respect to parameters $s$ and $t$. Different with the discrete time evolutions considered in \cite{miki11}, we consider an evolution with respect to variables $x$ and $y$ simultaneously.

In Section \ref{sec2}, we construct different spectral transformations in $s$-direction and $t$-direction, which, compatible with four-term recurrence relation, leading to an algebraic identity and nonlinear difference equations. By noting the normalization factor could be simply expressed by a discrete Gram determinant, in Section \ref{sec3}, we simplify nonlinear equations by using spectral transformations, and decouple nonlinear equations into bilinear equations. It was finally found that these bilinear equations have also been derived in \cite{kenyon15} from hexahedron recurrence relations. Besides, the bilinear equations could be expressed in terms of a single quartic equation, governed by the normalization factor of Cauchy-Jacobi orthogonal polynomials. It is nothing but the discrete CKP equation, equivalent to the star-triangle move in the Ising model studied by Kashaev \cite{kashaev95}. 

To conclude, there are some highlights of this paper: (1) We introduce a two-parameter deformation for Cauchy orthogonal polynomials in Jacobi type, and study its related discrete spectral transformations; (2) We provide a determinant solution for the discrete CKP equation. Different from the soliton solution, this is a molecule solution determined by its initial values; (3) We construct a Lax pair for the discrete CKP equation. The integrability of discrete CKP equation was obtained by Yang-Baxter equation, and we show its Lax integrability by using orthogonal polynomial method.

\section{Two-parameter generalization of Cauchy orthogonal polynomials}	\label{sec2}
Let's consider an inner product with Cauchy weight by defining a symmetric bilinear form $\langle\cdot,\cdot\rangle: \mathbb{R}[x]\times\mathbb{R}[y]\to\mathbb{R}$
\begin{align*}
\langle f(x),g(y)\rangle=\int_{\mathbb{R}_+\times\mathbb{R}_+}\frac{1}{x+y}f(x)g(y)\omega(x)\omega(y)dxdy.
\end{align*}
Here we assume that the weight function $\omega(x)$ is supported on a subset of $\mathbb{R}_+$ to avoid the singularity brought by the Cauchy kernel $1/(x+y)$.
With this inner product, we could obtain a family of monic orthogonal polynomials $\{P_n(x)\}_{n\in\mathbb{N}}$ by requiring the orthogonality
\begin{align*}
\langle P_n(x),P_m(y)\rangle=h_n\delta_{n,m},
\end{align*}
where $\delta_{n,m}=0$ if $n\neq m$ and $\delta_{n,m}=1$ if $n=m$. 
In this paper, we focus on a two-parameter deformation of the Cauchy inner product, reading as
\begin{align*}
\langle f(x),g(y)\rangle_{s,t}=\int_{(0,1)\times(0,1)}\frac{x^sy^s}{x+y}f(x)g(y)\left(
\frac{1-x}{1+x}
\right)^t\left(
\frac{1-y}{1+y}
\right)^t dxdy,\quad s,t\in\mathbb{N}.
\end{align*}
In other words, we choose a deformed Jacobi weight function $\omega(x)=x^s(1-x)^t (1+x)^{-t}$, supported on $(0,1)$.
With this deformed inner product, we could introduce a family of monic two-parameter Cauchy orthogonal polynomials $\{P^{s,t}_n(x)\}_{n\in\mathbb{N}}$ by orthogonality $\langle P_n^{s,t}(x),P_m^{s,t}(y)\rangle_{s,t}=h_n^{s,t}\delta_{n,m}$, where $h_n^{s,t}$ is a non-zero constant. 
\begin{proposition}
The two-parameter deformed monic Cauchy orthogonal polynomials have the following determinant expression
\begin{align*}
P_n^{s,t}(x)=\frac{1}{\tau_n^{s,t}}\left|\begin{array}{cccc}
m_{0,0}^{s,t}&\cdots&m_{0,n-1}^{s,t}&1\\
\vdots&&\vdots&\vdots\\
m_{n-1,0}^{s,t}&\cdots&m_{n-1,n-1}^{s,t}&x^{n-1}\\
m_{n,0}^{s,t}&\cdots&m_{n,n-1}^{s,t}&x^n
\end{array}
\right|,\quad \tau_n^{s,t}=\det\left(
m_{i,j}^{s,t}
\right)_{i,j=0}^{n-1},
\end{align*}
where
\begin{align*}
m_{i,j}^{s,t}=\langle x^i,y^j\rangle_{s,t}=\int_{(0,1)^2}\frac{x^{s+i}y^{s+j}}{x+y}\left(
\frac{1-x}{1+x}
\right)^t\left(
\frac{1-y}{1+y}
\right)^tdxdy.
\end{align*}
\end{proposition}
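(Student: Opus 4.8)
The plan is to recognize this as the standard Heine-type determinant formula for orthogonal polynomials attached to a bilinear form, and to verify it directly from the orthogonality conditions rather than by solving a linear system for the coefficients. First I would reduce the defining relation $\langle P_n^{s,t}(x),P_m^{s,t}(y)\rangle_{s,t}=h_n^{s,t}\delta_{n,m}$ to a more convenient form. Since the inner product is symmetric under $x\leftrightarrow y$ (the weight is symmetric) and each $P_m^{s,t}$ is monic of degree $m$, orthogonality to all lower-degree polynomials is equivalent to orthogonality against the monomials $y^0,y^1,\dots,y^{n-1}$. Thus it suffices to show that the candidate polynomial is monic of degree $n$ and satisfies $\langle P_n^{s,t}(x),y^j\rangle_{s,t}=0$ for $j=0,1,\dots,n-1$; the case $m>n$ then follows by interchanging the two arguments, and the case $m=n$ merely fixes the constant $h_n^{s,t}$.

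Next I would denote by $D_n(x)$ the determinant on the right-hand side and check that it is monic up to the factor $\tau_n^{s,t}$. Expanding $D_n(x)$ along its last column, the coefficient of the highest power $x^n$ is precisely the cofactor $\det(m_{i,j}^{s,t})_{i,j=0}^{n-1}=\tau_n^{s,t}$, so that $P_n^{s,t}(x)=D_n(x)/\tau_n^{s,t}$ has leading coefficient one, provided $\tau_n^{s,t}\neq 0$. This non-vanishing is exactly the non-degeneracy already built into the hypothesis that $h_n^{s,t}$ is a non-zero constant.

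For the orthogonality, the key observation is that the functional $\langle\,\cdot\,,y^j\rangle_{s,t}$ acts only on the last column of $D_n(x)$. Using multilinearity of the determinant in its columns, pairing $D_n(x)$ against $y^j$ replaces each monomial entry $x^k$ in the last column by $\langle x^k,y^j\rangle_{s,t}=m_{k,j}^{s,t}$, so that $\langle D_n(x),y^j\rangle_{s,t}$ equals the determinant of the matrix whose last column is $(m_{0,j}^{s,t},\dots,m_{n,j}^{s,t})^{\top}$. For every $j$ with $0\le j\le n-1$ this column coincides with the $(j+1)$-th column of the matrix, so the determinant has two identical columns and vanishes. Hence $\langle P_n^{s,t}(x),y^j\rangle_{s,t}=0$ for $j<n$, which gives the required orthogonality.

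I do not anticipate a genuine obstacle here, as the argument is the standard multilinear-algebra construction; the only point requiring care is the non-degeneracy $\tau_n^{s,t}\neq 0$, which guarantees simultaneously that the formula is well-defined and that the monic orthogonal polynomial is unique, so that the constructed $D_n(x)/\tau_n^{s,t}$ must coincide with $P_n^{s,t}(x)$. As a by-product, evaluating $\langle P_n^{s,t}(x),y^n\rangle_{s,t}$ by the same column-replacement trick yields $h_n^{s,t}=\tau_{n+1}^{s,t}/\tau_n^{s,t}$, which will be useful when the normalization factor is later identified with the $\tau$-function.
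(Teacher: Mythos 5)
Your proof is correct, and it supplies the standard argument that the paper omits entirely: the proposition is stated there without proof, the determinant formula being treated as a known Heine-type construction for bilinear forms. Your three steps --- checking the leading coefficient by expanding along the last column, reducing biorthogonality of the $P_n^{s,t}$ to orthogonality against the monomials $y^0,\dots,y^{n-1}$ via symmetry of the form, and killing each pairing $\langle D_n(x),y^j\rangle_{s,t}$ by the column-replacement trick that produces two identical columns --- are exactly what is needed. One refinement worth making: rather than taking $\tau_n^{s,t}\neq 0$ from the assumed existence of the $P_n^{s,t}$ with $h_n^{s,t}\neq 0$ (which makes existence a hypothesis and your argument conditional), you can obtain it unconditionally from the Andr\'eief-type multiple-integral representation of $\tau_n^{s,t}$ that the paper records immediately after the proposition: the integrand there is strictly positive on the ordered configuration space, so $\tau_n^{s,t}>0$ for all $n,s,t$. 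With that in hand, your determinant construction simultaneously proves existence, uniqueness, and the formula, and your by-product $h_n^{s,t}=\tau_{n+1}^{s,t}/\tau_n^{s,t}$ is automatically non-zero --- which is the logical order the paper intends, since it invokes precisely this positivity to justify that the normalization factors never vanish.
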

Using the determinant formula, it is not difficult to verify that $h_n^{s,t}=\tau_{n+1}^{s,t}/\tau_n^{s,t}$, and the non-zero property of normalization factor is guaranteed by the Andr\'eief formula (for reference, see \cite[Equation (2.5)]{bertola09})
\begin{align}\label{multiple}
\tau_n^{s,t}=\int_{0<x_1<\cdots<x_n<1\atop 0<y_1<\cdots<y_n<1}\prod_{i,j=1}^n \frac{x_i^sy_j^s}{x_i+y_j}\prod_{1\leq i<j\leq n}(x_i-x_j)^2(y_i-y_j)^2\prod_{i=1}^n \left(
\frac{1-x_i}{1+x_i}
\right)^t\left(
\frac{1-y_i}{1+y_i}
\right)^t
dx_idy_i.
\end{align}
Noting that every term in the integration is positive within the interval $(0,1)$, we can define a probability measure on the configuration space. However, different from the Cauchy-Laguerre case, it is still unknown how to evaluate this partition function in terms of hypergeometric functions.

Moreover, we have the following properties of the deformed inner product.
\begin{proposition}
The two-parameter deformed inner product has the following evolution properties:
\begin{enumerate}
\item $
\langle f(x),g(y)\rangle_{s+1,t}=\langle xf(x),yg(y)\rangle_{s,t};
$
\item $\langle f(x),g(y)\rangle_{s,t+1}=\langle f(x),g(y)\rangle_{s,t}-\mathcal{L}_{s,t}(f)\mathcal{L}_{s,t}(g)$, where 
\begin{align*}
\mathcal{L}_{s,t}(f)=\sqrt{2}\int_0^1\frac{x^s}{1+x}f(x)\left(
\frac{1-x}{1+x}
\right)^tdx.
\end{align*}
\end{enumerate}
\end{proposition}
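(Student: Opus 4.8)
The plan is to verify both identities by directly manipulating the defining integral, since each statement is an assertion about how the weight changes when a single parameter is incremented. For part (1), I would simply note that raising $s$ to $s+1$ multiplies the integrand of $\langle\cdot,\cdot\rangle_{s,t}$ by an extra factor $xy$, while leaving the Cauchy kernel $1/(x+y)$, the Jacobi factors, and the domain of integration untouched. Absorbing $x$ into the first argument and $y$ into the second then yields $\langle f(x),g(y)\rangle_{s+1,t}=\langle xf(x),yg(y)\rangle_{s,t}$ at once; there is nothing deeper to prove here.

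For part (2), I would form the difference $\langle f,g\rangle_{s,t+1}-\langle f,g\rangle_{s,t}$ and isolate the factor coming from incrementing $t$. Writing $u=\frac{1-x}{1+x}$ and $v=\frac{1-y}{1+y}$, the two integrands share the common factor $x^sy^s f(x)g(y)u^tv^t/(x+y)$, and their difference carries the extra factor $uv-1$. The crucial computation is the simplification
\[
uv-1=\frac{(1-x)(1-y)-(1+x)(1+y)}{(1+x)(1+y)}=\frac{-2(x+y)}{(1+x)(1+y)},
\]
which rests on the elementary identity $(1-x)(1-y)-(1+x)(1+y)=-2(x+y)$.

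The point I expect to matter most is that this numerator is exactly $-2(x+y)$, so that the factor it produces cancels the Cauchy kernel in the denominator. After this cancellation the double integral decouples into a product of two single integrals,
\[
\langle f,g\rangle_{s,t+1}-\langle f,g\rangle_{s,t}=-2\left(\int_0^1\frac{x^s}{1+x}f(x)u^t\,dx\right)\left(\int_0^1\frac{y^s}{1+y}g(y)v^t\,dy\right),
\]
and each factor is precisely $\mathcal{L}_{s,t}(f)/\sqrt2$ and $\mathcal{L}_{s,t}(g)/\sqrt2$. The prefactor $\sqrt2$ in the definition of $\mathcal{L}_{s,t}$ is chosen exactly to absorb the constant $2$, giving $-\mathcal{L}_{s,t}(f)\mathcal{L}_{s,t}(g)$ and hence the claimed identity. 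There is no genuine analytic obstacle—convergence on $(0,1)^2$ is already assured by the positivity remark following the first proposition—so the whole argument hinges on recognizing the algebraic cancellation of the Cauchy kernel, which is what forces the rank-one structure of the $t$-evolution.
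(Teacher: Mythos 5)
Your proof is correct. The paper states this proposition without proof, treating it as a routine verification, and your computation is exactly the intended one: part (1) is immediate absorption of the extra factor $xy$, and part (2) hinges on the cancellation
\begin{align*}
\frac{(1-x)(1-y)}{(1+x)(1+y)}-1=\frac{-2(x+y)}{(1+x)(1+y)},
\end{align*}
which kills the Cauchy kernel, decouples the double integral into a rank-one product, and is absorbed by the $\sqrt{2}$ in the definition of $\mathcal{L}_{s,t}$.
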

As a direct result, we know that moments $\{m_{i,j}^{s,t}\}_{i,j\in\mathbb{N}}$ satisfy relations
\begin{align}
\begin{aligned}
m_{i,j}^{s+1,t}=m_{i+1,j+1}^{s,t},\quad 
m_{i,j}^{s,t+1}=m_{i,j}^{s,t}-\phi_i^{s,t}\phi_j^{s,t}
\end{aligned}
\end{align}
with $\phi_i^{s,t}=\mathcal{L}_{s,t}(x^i)$. 
Similar to ordinary Cauchy orthogonal polynomials, we have the following four-term recurrence relation for $\{P_n^{s,t}(x)\}_{n\in\mathbb{N}}$.
\begin{proposition}\label{prop_4trr}
For any $s,\,t\in\mathbb{N}$, we have 
\begin{align}\label{4trr}
\begin{aligned}
x(P_n^{s,t}(x)&+a_n^{s,t}P_{n-1}^{s,t}(x))\\
&=P_{n+1}^{s,t}(x)+(a_n^{s,t}-b_n^{s,t})P_n^{s,t}(x)+(-c_n^{s,t}+a_n^{s,t}b_{n-1}^{s,t})P^{s,t}_{n-1}(x)-a_n^{s,t}c_{n-1}^{s,t}P^{s,t}_{n-2}(x)
\end{aligned}
\end{align}
for some non-zero coefficients
$a_n^{s,t}$, $b_n^{s,t}$ and $c_n^{s,t}$, 
where initial values are given by $P_{-1}^{s,t}(x)=0$ and $P_0^{s,t}(x)=1$.
\end{proposition}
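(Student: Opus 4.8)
The plan is to derive the recurrence from the single algebraic identity that the Cauchy kernel imposes on the bilinear form, reading off the coefficients from the way multiplication by $x$ acts in the orthogonal basis. Two structural facts come first. Since the weight enters symmetrically and $1/(x+y)$ is symmetric, the moment matrix $(m_{i,j}^{s,t})$ is symmetric, hence $\langle f(x),g(y)\rangle_{s,t}=\langle g(x),f(y)\rangle_{s,t}$ and the left and right orthogonal families coincide: we work with the single family $\{P_n^{s,t}\}$. Multiplying the kernel by $x+y$ collapses it, yielding the key identity
\[
\langle xf(x),g(y)\rangle_{s,t}+\langle f(x),yg(y)\rangle_{s,t}=\ell_{s,t}(f)\,\ell_{s,t}(g),\qquad \ell_{s,t}(f):=\int_0^1 f(x)\,x^s\Big(\tfrac{1-x}{1+x}\Big)^t dx ,
\]
so the ``defect'' produced when $x$ acts through the Cauchy form is rank one.

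Next I would expand $xP_n^{s,t}=\sum_{k=0}^{n+1}L_{n,k}P_k^{s,t}$, where $L_{n,k}\,h_k^{s,t}=\langle xP_n^{s,t},P_k^{s,t}\rangle_{s,t}$. The identity gives $\langle xP_n,P_k\rangle=\ell_{s,t}(P_n)\ell_{s,t}(P_k)-\langle P_n,yP_k\rangle$, and for $k\le n-2$ the polynomial $yP_k$ has degree below $n$, so the last term vanishes by orthogonality and $L_{n,k}=\ell_{s,t}(P_n)\ell_{s,t}(P_k)/h_k^{s,t}$ --- a rank-one tail strictly below the subdiagonal. The same holds for $P_{n-1}$ when $k\le n-3$, so in those degrees the combination $x(P_n+a_n P_{n-1})$ carries the coefficient $\big(\ell_{s,t}(P_n)+a_n\ell_{s,t}(P_{n-1})\big)\ell_{s,t}(P_k)/h_k^{s,t}$; choosing $a_n^{s,t}=-\ell_{s,t}(P_n)/\ell_{s,t}(P_{n-1})$ annihilates the entire tail simultaneously and leaves precisely the four terms $P_{n+1},P_n,P_{n-1},P_{n-2}$. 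This already establishes a four-term recurrence; the remaining task is to match it to \eqref{4trr}.

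Then I would name the band coefficients and check the stated parametrisation. The $P_{n+1}$-coefficient is $1$ by monicity. Symmetry pins down the diagonal entry: taking $f=g=P_n$ in the key identity and using $\langle xP_n,P_n\rangle=\langle P_n,yP_n\rangle$ gives $L_{n,n}=\ell_{s,t}(P_n)^2/(2h_n^{s,t})$. With $b_n^{s,t}:=-L_{n,n}$ the $P_n$-coefficient of $x(P_n+a_nP_{n-1})$ is $L_{n,n}+a_n=a_n-b_n$, and with $c_n^{s,t}:=\ell_{s,t}(P_n)\ell_{s,t}(P_{n-1})/h_{n-1}^{s,t}-L_{n,n-1}$ the $P_{n-2}$-coefficient $L_{n,n-2}+a_nL_{n-1,n-2}$ becomes $-a_nc_{n-1}$ once one substitutes $\ell_{s,t}(P_n)=-a_n\ell_{s,t}(P_{n-1})$ together with the rank-one value of $L_{n,n-2}$. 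Finally the $P_{n-1}$-coefficient $L_{n,n-1}+a_nL_{n-1,n-1}$ collapses to $-c_n+a_nb_{n-1}$ using exactly the symmetric value $L_{n-1,n-1}=\ell_{s,t}(P_{n-1})^2/(2h_{n-1}^{s,t})$; this last step is the one genuinely non-formal identity, and it is where the symmetry of the inner product is indispensable.

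The main obstacle is not the algebra but the nonvanishing of the coefficients, namely that $\ell_{s,t}(P_n^{s,t})\neq0$ (so that $a_n^{s,t}$ and $b_n^{s,t}$ are well defined and nonzero) and that $c_n^{s,t}\neq0$. For this I would pass to the determinant representation of $P_n^{s,t}$: the quantity $\ell_{s,t}(P_n^{s,t})$ is a bordered Gram determinant divided by $\tau_n^{s,t}$, and since the integrand in the Andr\'eief representation \eqref{multiple} is strictly positive on $(0,1)$, the relevant Gram determinants are strictly positive by the total positivity of the Cauchy kernel. This yields $\tau_n^{s,t}>0$ and $\ell_{s,t}(P_n^{s,t})\neq0$, legitimising the division defining $a_n^{s,t}$; a similar analysis of the associated $\tau$-functions forces $c_n^{s,t}\neq0$ as well, so all three coefficients are nonzero and the argument is complete.
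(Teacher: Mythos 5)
Your proposal is correct and is essentially the argument the paper relies on (the paper states the proposition without proof, importing the coefficient formulas \eqref{coeff1} from \cite{li19}): the rank-one collapse identity $\langle xf,g\rangle_{s,t}+\langle f,yg\rangle_{s,t}=\ell_{s,t}(f)\ell_{s,t}(g)$ with the single-moment functional $\ell_{s,t}(x^i)=m_i^{s,t}$ --- exactly the functional singled out in the paper's Remark 2.4, as opposed to $\mathcal{L}_{s,t}$ --- followed by the choice $a_n^{s,t}=-\ell_{s,t}(P_n^{s,t})/\ell_{s,t}(P_{n-1}^{s,t})$ that annihilates the rank-one tail. Your coefficients agree with \eqref{coeff1}, since $a_n^{s,t}=-\tilde{\sigma}_n^{s,t}\tau_{n-1}^{s,t}/(\tilde{\sigma}_{n-1}^{s,t}\tau_n^{s,t})$, $b_n^{s,t}=-L_{n,n}=p_{n+1}^{s,t}-p_n^{s,t}$, and your $c_n^{s,t}$ reduces via the key identity to $h_n^{s,t}/h_{n-1}^{s,t}=\tau_{n+1}^{s,t}\tau_{n-1}^{s,t}/(\tau_n^{s,t})^2$, while the positivity of $\tau_n^{s,t}$ and of the bordered determinant $\tilde{\sigma}_n^{s,t}$ (hence the nonvanishing of the coefficients, which the paper asserts without justification) follows from the Andr\'eief-type expansions you indicate.
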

It has been noticed in \cite{li19} that variables $a_n^{s,t}$, $b_n^{s,t}$ and $c_n^{s,t}$ have determinant expressions. By introducing
\begin{align}\label{sigma}
\tilde{\sigma}_{n}^{s,t}=\left|\begin{array}{cccc}
m^{s,t}_{0,0}&\cdots&m^{s,t}_{0,n-1}&m^{s,t}_0\\
m^{s,t}_{1,0}&\cdots&m^{s,t}_{1,n-1}&m^{s,t}_1\\
\vdots&&\vdots&\vdots\\
m^{s,t}_{n,0}&\cdots&m^{s,t}_{n,n-1}&m_n^{s,t}
\end{array}
\right|,\quad m_i^{s,t}=\int_0^1 x^{s+i} \left(
\frac{1-x}{1+x}
\right)^t dx
\end{align}
and denoting $p_n^{s,t}$ as the coefficient of $x^{n-1}$ in $P_n^{s,t}(x)$, we have
\begin{align}\label{coeff1}
a_n^{s,t}=-\frac{\tilde{\sigma}^{s,t}_{n}\tau^{s,t}_{n-1}}{\tilde{\sigma}^{s,t}_{n-1}\tau^{s,t}_{n}},
\quad b_n=p_{n+1}^{s,t}-p_n^{s,t},\quad c_n^{s,t}=\frac{\tau_{n-1}^{s,t}\tau_{n+1}^{s,t}}{(\tau_n^{s,t})^2}.
\end{align}
It should be remarked that $p_{n}^{s,t}$ has determinant expression as well. If we introduce
\begin{align*}
\tit_n^{s,t}=\left|\begin{array}{ccc}
m_{0,0}^{s,t}&\cdots&m_{0,n-1}^{s,t}\\
\vdots&&\vdots\\
m_{n-2,0}^{s,t}&\cdots&m_{n-2,n-1}^{s,t}\\
m_{n,0}^{s,t}&\cdots&m_{n,n-1}^{s,t}
\end{array}
\right|,
\end{align*}
then we have $p_n^{s,t}=-\tit_n^{s,t}/\tau_n^{s,t}$.
\begin{remark}
The four-term recurrence relation is based on the fact that
\begin{align*}
\int_0^{1} \left(P_n^{s,t}(x)+a_n^{s,t}P_{n-1}^{s,t}(x)\right)dx=0
\end{align*}
rather than $$\mathcal{L}_{s,t}\left(
P_n^{s,t}(x)+a_n^{s,t}P_{n-1}^{s,t}(x)
\right)=0.$$ Therefore, we need to introduce single moments $m_i^{s,t}$ instead of $\phi_i^{s,t}$ to express the coefficient $a_n^{s,t}$.
\end{remark}

\subsection{Discrete spectral transformation I}
The first discrete spectral transformation is related to another family of polynomials $\{Q_n^{s,t}(x)\}_{n\in\mathbb{N}}$.
Let's define monic polynomials
\begin{align*}
Q_n^{s,t}(x)=\frac{1}{\xi_n^{s,t}}\left|\begin{array}{cccc}
m_{0,1}^{s,t}&\cdots&m_{0,n}^{s,t}&1\\
\vdots&&\vdots&\vdots\\
m_{n,1}^{s,t}&\cdots&m_{n,n}^{s,t}&x^n\\
\end{array}
\right|,\quad \xi_n^{s,t}=\det\left(
m_{i,j+1}^{s,t}
\right)_{i,j=0}^{n-1}.
\end{align*}
In the theory of orthogonal polynomials, this family of polynomials is called the adjacent family of polynomials $\{P_n^{s,t}(x)\}_{n\in\mathbb{N}}$. Similar to the normalization factor $\tau_{n}^{s,t}$, we can show that 
\begin{align*}
\xi_n^{s,t}=\int_{0<x_1<\cdots<x_n<1\atop 0<y_1<\cdots<y_n<1}\prod_{i,j=1}^n \frac{x_i^sy_j^{s+1}}{x_i+y_j}\prod_{1\leq i<j\leq n}(x_i-x_j)^2(y_i-y_j)^2\prod_{i=1}^n \left(
\frac{1-x_i}{1+x_i}
\right)^t\left(
\frac{1-y_i}{1+y_i}
\right)^t
dx_idy_i,
\end{align*}
which is a non-zero partition function.
By using Jacobi determinant identity to $\tau_{n+1}^{s,t}P_{n+1}^{s,t}$ on its first and last columns and the first and last rows, we have the following discrete transformation.
\begin{proposition}\label{pro2.5}
There is a discrete transformation 
\begin{align*}
P_n^{s+1,t}(x)=\frac{1}{x}\left(
P_{n+1}^{s,t}(x)+\frac{\xi_n^{s,t}\xi_{n+1}^{s,t}}{\tau_n^{s+1,t}\tau_{n+1}^{s,t}}Q_n^{s,t}(x)
\right).
\end{align*}
\end{proposition}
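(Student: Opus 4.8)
The plan is to express $\tau_{n+1}^{s,t}P_{n+1}^{s,t}(x)$ as a single bordered determinant of size $n+2$ and then collapse it with the Desnanot--Jacobi (Dodgson) condensation identity applied to the four corner minors, exactly as hinted in the sentence preceding the statement. Concretely, from the determinant formula for $P_{n+1}^{s,t}$ one has $\tau_{n+1}^{s,t}P_{n+1}^{s,t}(x)=\det M$, where
\begin{align*}
M=\begin{pmatrix}
m_{0,0}^{s,t}&\cdots&m_{0,n}^{s,t}&1\\
\vdots&&\vdots&\vdots\\
m_{n,0}^{s,t}&\cdots&m_{n,n}^{s,t}&x^{n}\\
m_{n+1,0}^{s,t}&\cdots&m_{n+1,n}^{s,t}&x^{n+1}
\end{pmatrix}
\end{align*}
is indexed by rows and columns $0,\dots,n+1$, the last column carrying the powers $1,x,\dots,x^{n+1}$.

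Next I would apply Jacobi's identity to the extreme (first and last) rows and columns of $M$, namely
\begin{align*}
\det M\cdot\det M_{\mathrm{int}}=\det M_{\mathrm{NW}}\,\det M_{\mathrm{SE}}-\det M_{\mathrm{NE}}\,\det M_{\mathrm{SW}},
\end{align*}
where $M_{\mathrm{int}}$ deletes rows and columns $0$ and $n+1$; $M_{\mathrm{NW}}$ deletes the last row and last column; $M_{\mathrm{SE}}$ deletes the first row and first column; $M_{\mathrm{NE}}$ deletes the last row and the first column; and $M_{\mathrm{SW}}$ deletes the first row and the last column. The core of the argument is to recognize each corner minor. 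Using the moment shift $m_{i,j}^{s+1,t}=m_{i+1,j+1}^{s,t}$, the interior block on indices $1,\dots,n$ becomes a Gram determinant with $s$ raised, so $\det M_{\mathrm{int}}=\tau_n^{s+1,t}$, while $\det M_{\mathrm{NW}}=\tau_{n+1}^{s,t}$ is read off directly. For $M_{\mathrm{SE}}$ the surviving power column is $x,x^2,\dots,x^{n+1}$; factoring out one $x$ and applying the same moment shift identifies it as the defining determinant of $P_n^{s+1,t}$, giving $\det M_{\mathrm{SE}}=x\,\tau_n^{s+1,t}P_n^{s+1,t}(x)$. Deleting the last row and first column leaves exactly the bordered array defining $Q_n^{s,t}$, so $\det M_{\mathrm{NE}}=\xi_n^{s,t}Q_n^{s,t}(x)$.

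It remains to identify $M_{\mathrm{SW}}$, which I expect to be the one subtle step. This minor has rows $1,\dots,n+1$ and columns $0,\dots,n$, whereas $\xi_{n+1}^{s,t}$ is built from rows $0,\dots,n$ and columns $1,\dots,n+1$; the two are transposes of one another only after invoking the symmetry $m_{i,j}^{s,t}=m_{j,i}^{s,t}$ of the bi-moments (immediate from the symmetric integrand), whence $\det M_{\mathrm{SW}}=\xi_{n+1}^{s,t}$. Substituting the five evaluations into Jacobi's identity yields
\begin{align*}
\tau_{n+1}^{s,t}P_{n+1}^{s,t}(x)\,\tau_n^{s+1,t}=\tau_{n+1}^{s,t}\,x\,\tau_n^{s+1,t}P_n^{s+1,t}(x)-\xi_n^{s,t}\xi_{n+1}^{s,t}Q_n^{s,t}(x),
\end{align*}
and solving for $P_n^{s+1,t}(x)$ by dividing through by $x\,\tau_n^{s+1,t}\tau_{n+1}^{s,t}$ gives the claimed formula. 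Aside from the transpose/symmetry identification of $M_{\mathrm{SW}}$, the only points requiring care are the sign convention in Jacobi's identity, which I would pin down once and for all by checking the $3\times3$ corner case, and the bookkeeping of the single factor of $x$ pulled from the $M_{\mathrm{SE}}$ column.
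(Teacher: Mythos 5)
Your proof is correct and takes exactly the approach the paper intends: the paper's entire argument is the one sentence preceding the proposition, instructing one to apply the Jacobi determinant identity to $\tau_{n+1}^{s,t}P_{n+1}^{s,t}$ on its first and last rows and columns, and your identification of the five minors (including the transpose-plus-moment-symmetry step giving $\det M_{\mathrm{SW}}=\xi_{n+1}^{s,t}$) supplies precisely the details the paper omits. The resulting identity and the division by $x\,\tau_n^{s+1,t}\tau_{n+1}^{s,t}$ match the stated formula, so there are no gaps.
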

Unlike the ordinary Christoffel transformation connecting $P_n^{s,t}(x)$ and $P_n^{s+1,t}(x)$, here we need to use an auxiliary polynomial $Q_n^{s,t}(x)$. In the next proposition, we demonstrate a Christoffel transformation for $Q_n^{s,t}(x)$. 
\begin{proposition}\label{pro2.6}
For auxiliary polynomials  $\{Q_n^{s,t}(x)\}_{n\in\mathbb{N}}$, we have the discrete spectral transformation
\begin{align*}
Q_n^{s,t}(x)=xP_{n-1}^{s+1,t}(x)-\frac{\xi_{n-1}^{s,t}\tau_n^{s+1,t}}{\xi_n^{s,t}\tau_{n-1}^{s+1,t}}Q_{n-1}^{s,t}(x).
\end{align*}
\end{proposition}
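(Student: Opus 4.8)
The plan is to characterize all three polynomials appearing in the identity by their orthogonality relations with respect to $\langle\cdot,\cdot\rangle_{s,t}$, rather than grinding through the determinants directly; this is the transparent analogue of the role the Jacobi identity plays in Proposition \ref{pro2.5}. From the determinant expression of $Q_n^{s,t}$, replacing the polynomial column $(1,x,\dots,x^n)^{\top}$ by $(m_{0,j}^{s,t},\dots,m_{n,j}^{s,t})^{\top}$ shows that $\langle Q_n^{s,t}(x),y^j\rangle_{s,t}=0$ for $j=1,\dots,n$, since the resulting determinant then has two equal columns; likewise $\langle Q_{n-1}^{s,t}(x),y^j\rangle_{s,t}=0$ for $j=1,\dots,n-1$. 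For the remaining term I would use evolution property (1): since $P_{n-1}^{s+1,t}$ is orthogonal in $\langle\cdot,\cdot\rangle_{s+1,t}$ to $y^0,\dots,y^{n-2}$, the identity $\langle P_{n-1}^{s+1,t}(x),y^j\rangle_{s+1,t}=\langle xP_{n-1}^{s+1,t}(x),y^{j+1}\rangle_{s,t}$ gives $\langle xP_{n-1}^{s+1,t}(x),y^k\rangle_{s,t}=0$ for $k=1,\dots,n-1$. Thus $Q_n^{s,t}(x)$, $xP_{n-1}^{s+1,t}(x)$ and $Q_{n-1}^{s,t}(x)$ are all orthogonal to $y^1,\dots,y^{n-1}$ in $\langle\cdot,\cdot\rangle_{s,t}$.

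Next I would exploit that $Q_n^{s,t}$ and $xP_{n-1}^{s+1,t}$ are both monic of degree $n$, so their difference $R(x):=Q_n^{s,t}(x)-xP_{n-1}^{s+1,t}(x)$ is a polynomial of degree at most $n-1$ still orthogonal to $y^1,\dots,y^{n-1}$. Writing $R(x)=\sum_{k=0}^{n-1}c_k x^k$, the conditions $\langle R,y^j\rangle_{s,t}=\sum_k c_k m_{k,j}^{s,t}=0$ for $j=1,\dots,n-1$ form a homogeneous linear system whose coefficient matrix $(m_{k,j}^{s,t})_{k=0,\dots,n-1;\,j=1,\dots,n-1}$ contains the nonzero $(n-1)\times(n-1)$ minor $\xi_{n-1}^{s,t}$, hence has rank $n-1$ and a one-dimensional solution space. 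Since $Q_{n-1}^{s,t}(x)$ is a nonzero element of exactly this space, there must be a scalar $\mu$ with $R(x)=\mu\,Q_{n-1}^{s,t}(x)$, which already establishes the stated form of the transformation.

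Finally I would pin down $\mu$ by evaluating at $x=0$. Setting $x=0$ in the determinant for $Q_n^{s,t}$ leaves only the top entry of the polynomial column, so a cofactor expansion along that column gives $Q_n^{s,t}(0)=(-1)^n\tau_n^{s+1,t}/\xi_n^{s,t}$, after recognizing the surviving minor $\det(m_{i,j}^{s,t})_{i,j=1}^{n}$ as $\tau_n^{s+1,t}$ through $m_{i,j}^{s+1,t}=m_{i+1,j+1}^{s,t}$; similarly $Q_{n-1}^{s,t}(0)=(-1)^{n-1}\tau_{n-1}^{s+1,t}/\xi_{n-1}^{s,t}$. Since $R(0)=Q_n^{s,t}(0)$, comparison yields $\mu=Q_n^{s,t}(0)/Q_{n-1}^{s,t}(0)=-\xi_{n-1}^{s,t}\tau_n^{s+1,t}/(\xi_n^{s,t}\tau_{n-1}^{s+1,t})$, which is exactly the claimed coefficient. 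The two points demanding care are the nondegeneracy step, where one must invoke the non-vanishing of $\xi_{n-1}^{s,t}$ (guaranteed by its partition-function representation) to be certain the solution space is genuinely one-dimensional, and the sign bookkeeping in the cofactor expansion; both are routine once set up. I would also note that the same conclusion can be reached by applying the Jacobi determinant identity to $\xi_n^{s,t}Q_n^{s,t}(x)$ on its first and last rows and columns, paralleling the proof of Proposition \ref{pro2.5}, which is likely the intended route.
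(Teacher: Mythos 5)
Your proof is correct, and its first half is essentially the paper's own argument: the paper likewise characterizes the polynomials through the relations $\langle Q_n^{s,t}(x),y^i\rangle_{s,t}=0$ for $1\leq i\leq n$ and $\langle xP_{n-1}^{s+1,t}(x),y^i\rangle_{s,t}=\langle P_{n-1}^{s+1,t}(x),y^{i-1}\rangle_{s+1,t}=0$ for $1\leq i\leq n-1$. Where you diverge is in the bookkeeping and the final normalization. The paper expands $xP_{n-1}^{s+1,t}(x)=Q_n^{s,t}(x)+\sum_{k=0}^{n-1}\zeta_{n,k}^{s,t}Q_k^{s,t}(x)$ and kills the coefficients $\zeta_{n,k}^{s,t}$, $k\leq n-2$, by pairing against $y^1,\dots,y^{n-1}$, then fixes the surviving coefficient by pairing against $y^n$, obtaining $\zeta_{n,n-1}^{s,t}=\langle xP_{n-1}^{s+1,t},y^{n}\rangle_{s,t}\langle Q_{n-1}^{s,t},y^n\rangle_{s,t}^{-1}$, which evaluates to the stated ratio since $\langle xP_{n-1}^{s+1,t},y^{n}\rangle_{s,t}=h_{n-1}^{s+1,t}=\tau_n^{s+1,t}/\tau_{n-1}^{s+1,t}$ and $\langle Q_{n-1}^{s,t},y^n\rangle_{s,t}=\xi_n^{s,t}/\xi_{n-1}^{s,t}$. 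You instead form the difference $R=Q_n^{s,t}-xP_{n-1}^{s+1,t}$, prove the solution space is one-dimensional by an explicit rank argument, and pin the scalar down by evaluation at $x=0$ via cofactor expansion. Both finishes are valid and of comparable length; your rank argument has the merit of making explicit the nondegeneracy (nonvanishing of $\xi_{n-1}^{s,t}$) that the paper's basis-expansion step uses only implicitly (its induction killing the $\zeta_{n,k}^{s,t}$ secretly needs $\langle Q_k^{s,t},y^{k+1}\rangle_{s,t}=\xi_{k+1}^{s,t}/\xi_k^{s,t}\neq 0$), and your evaluation at $x=0$ yields as a by-product the formula $Q_n^{s,t}(0)=(-1)^n\tau_n^{s+1,t}/\xi_n^{s,t}$, the exact analogue of the identity $P_n^{s,t}(0)=(-1)^{n}\xi_{n}^{s,t}/\tau_n^{s,t}$ that the paper uses later in Section \ref{sec3}. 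One small remark: your closing guess that the Jacobi determinant identity is ``likely the intended route'' is off — the paper reserves that technique for Proposition \ref{pro2.5} and proves Proposition \ref{pro2.6} by the orthogonality route, as you did.
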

\begin{proof}
This transformation could also be derived by the property of $Q_n^{s,t}(x)$.
Although $Q_n^{s,t}(x)$ can't form an orthogonal system, they have the relation
\begin{align*}
\langle Q_n^{s,t}(x),y^{i}\rangle_{s,t}=0,\quad 1\leq i\leq n.
\end{align*}
By expanding $xP_{n-1}^{s+1,t}(x)$ in terms of $Q_n^{s,t}(x)$, we have 
\begin{align*}
xP_{n-1}^{s+1,t}(x)=Q_n^{s,t}(x)+\sum_{k=0}^{n-1} \zeta_{n,k}^{s,t}Q_k^{s,t}(x). 
\end{align*}
Taking the inner product $\langle\cdot,y^i\rangle_{s,t}\, (i=1,\cdots,n)$, we have
\begin{align*}
\langle xP_{n-1}^{s+1,t}(x),y^i\rangle_{s,t}=\langle P_{n-1}^{s+1,t},y^{i-1}\rangle_{s+1,t}=0,\quad 1\leq i\leq n-1,
\end{align*}
and thus
\begin{align*}
\zeta_{n,k}^{s,t}=0,\quad 0\leq k\leq n-2,\quad \zeta_{n,n-1}^{s,t}=\langle xP_{n-1}^{s+1,t},y^{n}\rangle_{s,t}\langle Q_{n-1}^{s,t},y^n\rangle_{s,t}^{-1}=\frac{\xi_{n-1}^{s,t}\tau_n^{s+1,t}}{\xi_n^{s,t}\tau_{n-1}^{s+1,t}}.
\end{align*}
\end{proof}

Using Propositions \ref{pro2.5} and \ref{pro2.6}, we can conclude the following discrete spectral transformation for $\{P_n^{s,t}(x)\}_{n\in\mathbb{N}}$.
\begin{theorem}
$(s,t)$-deformed Cauchy orthogonal polynomials  $\{P_n^{s,t}(x)\}_{n\in\mathbb{N}}$ satisfy the following spectral transformation
\begin{align}\label{spec1}
x\left(
P_n^{s+1,t}(x)+\alpha_n^{s,t}P_{n-1}^{s+1,t}(x)
\right)=P_{n+1}^{s,t}(x)+\beta_n^{s,t}P_n^{s,t}(x),
\end{align}
where 
\begin{align*}
\alpha_n^{s,t}=\frac{\xi_{n+1}^{s,t}\tau_n^{s,t}}{\tau_{n+1}^{s,t}\xi_n^{s,t}}-\frac{\xi_{n+1}^{s,t}\xi_n^{s,t}}{\tau_{n+1}^{s,t}\tau_n^{s+1,t}},\quad \beta_n^{s,t}=\frac{\xi_{n+1}^{s,t}\tau_n^{s,t}}{\tau_{n+1}^{s,t}\xi_n^{s,t}}.
\end{align*}
\end{theorem}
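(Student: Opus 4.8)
The plan is to eliminate the auxiliary polynomial $Q_n^{s,t}(x)$ between Propositions \ref{pro2.5} and \ref{pro2.6}, so that only the two families $\{P_n^{s,t}(x)\}$ and $\{P_n^{s+1,t}(x)\}$ remain. Rewriting Proposition \ref{pro2.5} at level $n$ in the form
\[
xP_n^{s+1,t}(x)=P_{n+1}^{s,t}(x)+\frac{\xi_n^{s,t}\xi_{n+1}^{s,t}}{\tau_n^{s+1,t}\tau_{n+1}^{s,t}}Q_n^{s,t}(x),
\]
I see that the whole task reduces to expressing $Q_n^{s,t}(x)$ as a linear combination of $P_n^{s,t}(x)$ and $xP_{n-1}^{s+1,t}(x)$ alone.

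To achieve this, first I would use Proposition \ref{pro2.6} to write $Q_n^{s,t}(x)=xP_{n-1}^{s+1,t}(x)-\tfrac{\xi_{n-1}^{s,t}\tau_n^{s+1,t}}{\xi_n^{s,t}\tau_{n-1}^{s+1,t}}Q_{n-1}^{s,t}(x)$, which introduces the lower-order auxiliary polynomial $Q_{n-1}^{s,t}(x)$. Then I would apply Proposition \ref{pro2.5} one level down (replacing $n$ by $n-1$) and solve for $Q_{n-1}^{s,t}(x)$, obtaining
\[
Q_{n-1}^{s,t}(x)=\frac{\tau_{n-1}^{s+1,t}\tau_n^{s,t}}{\xi_{n-1}^{s,t}\xi_n^{s,t}}\left(xP_{n-1}^{s+1,t}(x)-P_n^{s,t}(x)\right).
\]
Substituting this back, the factors $\xi_{n-1}^{s,t}$ and $\tau_{n-1}^{s+1,t}$ cancel and the mixed coefficient collapses to $\tau_n^{s+1,t}\tau_n^{s,t}/(\xi_n^{s,t})^2$, leaving a clean representation of $Q_n^{s,t}(x)$ purely in terms of $xP_{n-1}^{s+1,t}(x)$ and $P_n^{s,t}(x)$.

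Finally I would insert this representation into the level-$n$ form of Proposition \ref{pro2.5}, move the resulting $xP_{n-1}^{s+1,t}(x)$ contribution to the left-hand side, and factor out $x$. A short computation identifies the coefficient of $P_n^{s,t}(x)$ on the right as $\beta_n^{s,t}=\xi_{n+1}^{s,t}\tau_n^{s,t}/(\tau_{n+1}^{s,t}\xi_n^{s,t})$, while the coefficient appearing with $P_{n-1}^{s+1,t}(x)$ on the left is exactly $\alpha_n^{s,t}=\beta_n^{s,t}-\xi_{n+1}^{s,t}\xi_n^{s,t}/(\tau_{n+1}^{s,t}\tau_n^{s+1,t})$, matching the stated formulas.

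There is no genuine analytic obstacle here; the entire argument is the elimination of $Q_n^{s,t}(x)$ together with the telescoping of three ratios of the $\tau$- and $\xi$-determinants. The only point demanding care is the bookkeeping in the cancellation: one must track which $\tau$-factors carry the shifted index $s+1$ rather than $s$, since it is precisely the surviving $\tau_n^{s+1,t}$ in the denominator of the second term of $\alpha_n^{s,t}$ that distinguishes this transformation from the ordinary Christoffel relation.
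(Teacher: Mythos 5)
Your proposal is correct and follows exactly the route the paper intends: the paper states this theorem with only the remark ``Using Propositions \ref{pro2.5} and \ref{pro2.6}, we can conclude\dots,'' and your elimination of $Q_n^{s,t}(x)$ and $Q_{n-1}^{s,t}(x)$ (applying Proposition \ref{pro2.5} at levels $n$ and $n-1$ together with Proposition \ref{pro2.6}) is precisely that argument carried out explicitly, with the coefficients $\alpha_n^{s,t}$ and $\beta_n^{s,t}$ matching the stated formulas.
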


\subsection{Discrete spectral transformation II}
This part is devoted to the discrete evolution on index $t$. Firstly, we have the following discrete transformation by introducing a family of auxiliary polynomials $\{R_n^{s,t}(x)\}_{n\in\mathbb{N}}$. We first define monic polynomials
\begin{align*}
R_n^{s,t}(x)=\frac{(-1)^{n-1}}{\sigma_{n-1}^{s,t}}\left|\begin{array}{ccccc}
\phi^{s,t}_0&m^{s,t}_{0,0}&\cdots&m^{s,t}_{0,n-2}&1\\
\phi^{s,t}_1&m^{s,t}_{1,0}&\cdots&m^{s,t}_{1,n-2}&x\\
\vdots&\vdots&&\vdots&\vdots\\
\phi^{s,t}_n&m^{s,t}_{n,0}&\cdots&m^{s,t}_{n,n-2}&x^n
\end{array}
\right|,
\end{align*}
where \begin{align*}
\sigma_n^{s,t}=\left|
\begin{array}{cccc}
m_{0,0}^{s,t}&\cdots&m_{0,n-1}^{s,t}&\phi_0^{s,t}\\
m_{1,0}^{s,t}&\cdots&m_{1,n-1}^{s,t}&\phi_1^{s,t}\\
\vdots&&\vdots&\vdots\\
m_{n,0}^{s,t}&\cdots&m_{n,n-1}^{s,t}&\phi_n^{s,t}
\end{array}
\right|.
\end{align*}
It should be noted that $\sigma_n^{s,t}$ given above is different from $\tilde{\sigma}_n^{s,t}$ given in \eqref{sigma}. 
In fact, $\sigma_n^{s,t}$ is defined by
 $\sigma_n^{s,t}=\tau_n^{s,t}\mathcal{L}_{s,t}(P_n^{s,t})$.
By using auxiliary polynomials $\{R_n^{s,t}(x)\}_{n\in\mathbb{N}}$, we can claim the following proposition.
\begin{proposition}\label{prop2.8}
There is a discrete transformation
\begin{align}\label{dt-1}
P_n^{s,t+1}(x)=R_n^{s,t}(x)-d_n^{s,t}P_{n-1}^{s,t+1}(x),
\end{align}
where 
\begin{align*}
d_n^{s,t}=-\frac{\sigma_n^{s,t}\tau_{n-1}^{s,t+1}}{\sigma_{n-1}^{s,t}\tau_n^{s,t+1}}.
\end{align*}
\end{proposition}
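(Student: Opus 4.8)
The plan is to exploit the two annihilation properties of $R_n^{s,t}(x)$ that drop out of its determinant definition, and then expand the degree-$\le n-1$ difference $P_n^{s,t+1}(x)-R_n^{s,t}(x)$ in the orthogonal basis $\{P_k^{s,t+1}(x)\}_{k=0}^{n-1}$, reading off the coefficients by pairing against the $P_j^{s,t+1}(y)$. Both $R_n^{s,t}$ and $P_n^{s,t+1}$ are monic of degree $n$ (for $R_n^{s,t}$ this follows by cycling the $\phi$-column in the relevant $n\times n$ minor, which reproduces $\sigma_{n-1}^{s,t}$ and cancels the prefactor), so the difference indeed has degree at most $n-1$.

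First I would record the annihilation properties. Writing $R_n^{s,t}(x)=\sum_{k=0}^n r_k x^k$ and expanding the defining determinant along its last column, the pairing $\langle R_n^{s,t}(x),y^j\rangle_{s,t}=\sum_k r_k m_{k,j}^{s,t}$ amounts to replacing the last column $(1,x,\dots,x^n)^{\top}$ by $(m_{0,j}^{s,t},\dots,m_{n,j}^{s,t})^{\top}$. For $0\le j\le n-2$ this duplicates one of the middle columns, so the determinant vanishes and $\langle R_n^{s,t}(x),y^j\rangle_{s,t}=0$. Likewise, replacing the last column by $(\phi_0^{s,t},\dots,\phi_n^{s,t})^{\top}$ duplicates the first column, whence $\mathcal{L}_{s,t}(R_n^{s,t})=\sum_k r_k\phi_k^{s,t}=0$. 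This second identity is the key enabling fact.

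Next I would expand $P_n^{s,t+1}(x)-R_n^{s,t}(x)=\sum_{j=0}^{n-1}\gamma_j P_j^{s,t+1}(x)$ and pair both sides against $P_j^{s,t+1}(y)$ under $\langle\cdot,\cdot\rangle_{s,t+1}$. Orthogonality of $\{P_k^{s,t+1}\}$ annihilates the $P_n^{s,t+1}$ term, leaving $\gamma_j h_j^{s,t+1}=-\langle R_n^{s,t}(x),P_j^{s,t+1}(y)\rangle_{s,t+1}$. Invoking the second evolution property of the inner product, $\langle f,g\rangle_{s,t+1}=\langle f,g\rangle_{s,t}-\mathcal{L}_{s,t}(f)\mathcal{L}_{s,t}(g)$, together with $\mathcal{L}_{s,t}(R_n^{s,t})=0$, collapses the $(s,t+1)$-pairing into the $(s,t)$-pairing $\langle R_n^{s,t}(x),P_j^{s,t+1}(y)\rangle_{s,t}$. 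Since $P_j^{s,t+1}(y)$ is supported on $y^0,\dots,y^j$, the annihilation property forces $\gamma_j=0$ for $0\le j\le n-2$, so only the top coefficient survives and the transformation has the claimed two-term shape.

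Finally I would pin down $\gamma_{n-1}$. Monicity of $P_{n-1}^{s,t+1}$ gives $\langle R_n^{s,t}(x),P_{n-1}^{s,t+1}(y)\rangle_{s,t}=\langle R_n^{s,t}(x),y^{n-1}\rangle_{s,t}$, and the column computation above identifies this with $\tfrac{(-1)^{n-1}}{\sigma_{n-1}^{s,t}}$ times the $(n+1)\times(n+1)$ determinant whose last column is $(m_{0,n-1}^{s,t},\dots,m_{n,n-1}^{s,t})^{\top}$; cycling its $\phi$-column from first to last position (sign $(-1)^n$) turns it into $\sigma_n^{s,t}$, so the pairing equals $-\sigma_n^{s,t}/\sigma_{n-1}^{s,t}$. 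Combined with $h_{n-1}^{s,t+1}=\tau_n^{s,t+1}/\tau_{n-1}^{s,t+1}$ this yields $\gamma_{n-1}=\sigma_n^{s,t}\tau_{n-1}^{s,t+1}/(\sigma_{n-1}^{s,t}\tau_n^{s,t+1})=-d_n^{s,t}$, which is exactly \eqref{dt-1}. The main obstacle is the sign and placement bookkeeping in this last determinant identity—correctly matching the position of the $\phi$-column so that the $(n+1)\times(n+1)$ determinant recombines into $\sigma_n^{s,t}$ rather than a sign-flipped variant.
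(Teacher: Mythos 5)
Your proof is correct, but it follows a genuinely different route from the paper's. The paper proves Proposition \ref{prop2.8} by a purely determinantal argument: substituting $m_{i,j}^{s,t+1}=m_{i,j}^{s,t}-\phi_i^{s,t}\phi_j^{s,t}$ turns $\tau_n^{s,t+1}P_n^{s,t+1}(x)$ into a bordered $(n+2)\times(n+2)$ determinant (with an extra row $(1,\phi_0^{s,t},\dots,\phi_{n-1}^{s,t},0)$ and an extra column of $\phi_i^{s,t}$'s), and a single application of Jacobi's determinant identity to the first and last rows and the last two columns immediately yields the three-term relation \eqref{dt-1}, with $d_n^{s,t}$ read off as a ratio of minors. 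You instead characterize $R_n^{s,t}$ by its annihilation properties, expand the degree-$\le n-1$ difference $P_n^{s,t+1}-R_n^{s,t}$ in the orthogonal basis $\{P_k^{s,t+1}\}$, and use the inner-product evolution $\langle f,g\rangle_{s,t+1}=\langle f,g\rangle_{s,t}-\mathcal{L}_{s,t}(f)\mathcal{L}_{s,t}(g)$ together with $\mathcal{L}_{s,t}(R_n^{s,t})=0$ to collapse the $(s,t+1)$-pairings; your sign bookkeeping ($(-1)^{n-1}$ prefactor against the $(-1)^n$ from cycling the $\phi$-column, giving the pairing $-\sigma_n^{s,t}/\sigma_{n-1}^{s,t}$) checks out. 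What the paper's route buys is brevity, at the cost of invoking the bordered-determinant trick and Jacobi's identity; what your route buys is that it is more elementary (only orthogonality and linear algebra), it is stylistically the same technique the paper itself uses for Proposition \ref{pro2.6} and for the transformation \eqref{trans2}, and, as a genuine bonus, it actually \emph{proves} the characterization \eqref{propr} of $R_n^{s,t}$, which the paper states after the proposition without proof and then relies on later.
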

\begin{proof}
By substituting $m_{i,j}^{s,t+1}=m_{i,j}^{s,t}-\phi_i^{s,t}\phi_j^{s,t}$, we know
\begin{align*}
P_n^{s,t+1}(x)=\frac{1}{\tau_n^{s,t+1}}\left|\begin{array}{ccccc}
1&\phi^{s,t}_0&\cdots&\phi^{s,t}_{n-1}&0\\
\phi^{s,t}_0&m^{s,t}_{0,0}&\cdots&m^{s,t}_{0,n-1}&1\\
\vdots&\vdots&&\vdots&\vdots\\
\phi^{s,t}_n&m^{s,t}_{n,0}&\cdots&m^{s,t}_{n,n-1}&x^n
\end{array}
\right|.
\end{align*}
An application of Jacobi identity to the above determinant for the first and last rows and last two columns gives the result. 
\end{proof}
In fact, $R_n^{s,t}(x)$ can be uniquely determined by relations
\begin{align}\label{propr}
\langle R_n^{s,t}(x), y^i\rangle_{s,t}=0,\quad 0\leq i\leq n-2,\qquad \mathcal{L}_{s,t}(R_n^{s,t})=0.
\end{align}
By making the use of $R_n^{s,t}(x)$, we have the following discrete spectral transformation.
\begin{theorem}
We have the discrete spectral transformation
\begin{align}\label{trans2}
P_n^{s,t+1}(x)+d_n^{s,t}P_{n-1}^{s,t+1}(x)=P_n^{s,t}(x)+e_{n}^{s,t}P_{n-1}^{s,t}(x),
\end{align}
where 
\begin{align*}
e_n^{s,t}=-\frac{\sigma_n^{s,t}\tau_{n-1}^{s,t}}{\sigma_{n-1}^{s,t}\tau_{n}^{s,t}}.
\end{align*}
\end{theorem}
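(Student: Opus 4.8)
The plan is to recognize that the left-hand side of \eqref{trans2} is precisely the auxiliary polynomial $R_n^{s,t}(x)$. Indeed, rewriting the discrete transformation \eqref{dt-1} from Proposition \ref{prop2.8} as
\[
P_n^{s,t+1}(x)+d_n^{s,t}P_{n-1}^{s,t+1}(x)=R_n^{s,t}(x),
\]
the claim \eqref{trans2} reduces to establishing the single identity at the fixed level $(s,t)$,
\[
R_n^{s,t}(x)=P_n^{s,t}(x)+e_n^{s,t}P_{n-1}^{s,t}(x).
\]
The remaining task is therefore to express the monic degree-$n$ polynomial $R_n^{s,t}$ in the orthogonal basis $\{P_k^{s,t}\}$.

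First I would expand $R_n^{s,t}(x)=P_n^{s,t}(x)+\sum_{k=0}^{n-1}\gamma_{n,k}P_k^{s,t}(x)$, which is legitimate since both families are monic of degree $n$. The coefficients are then pinned down by the defining relations \eqref{propr}. Pairing with $y^i$ for $0\le i\le n-2$ and using that $\langle P_k^{s,t}(x),y^i\rangle_{s,t}=0$ whenever $k>i$ (because $y^i$ lies in the span of $P_0^{s,t}(y),\dots,P_i^{s,t}(y)$) while $\langle P_i^{s,t}(x),y^i\rangle_{s,t}=h_i^{s,t}\neq0$, an induction on $i$ forces $\gamma_{n,0}=\cdots=\gamma_{n,n-2}=0$. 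Hence only $\gamma_{n,n-1}$ survives, giving $R_n^{s,t}=P_n^{s,t}+\gamma_{n,n-1}P_{n-1}^{s,t}$.

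Second I would fix $\gamma_{n,n-1}$ using the remaining condition $\mathcal{L}_{s,t}(R_n^{s,t})=0$ from \eqref{propr}. Applying $\mathcal{L}_{s,t}$ to the two-term expansion yields $\mathcal{L}_{s,t}(P_n^{s,t})+\gamma_{n,n-1}\mathcal{L}_{s,t}(P_{n-1}^{s,t})=0$, so that $\gamma_{n,n-1}=-\mathcal{L}_{s,t}(P_n^{s,t})/\mathcal{L}_{s,t}(P_{n-1}^{s,t})$. Invoking the identity $\sigma_n^{s,t}=\tau_n^{s,t}\mathcal{L}_{s,t}(P_n^{s,t})$ recorded just before Proposition \ref{prop2.8} rewrites this ratio as $\gamma_{n,n-1}=-\sigma_n^{s,t}\tau_{n-1}^{s,t}/(\sigma_{n-1}^{s,t}\tau_n^{s,t})=e_n^{s,t}$, completing the identification and hence the theorem.

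I do not anticipate a serious obstacle; the only points requiring care are verifying that the conditions \eqref{propr} genuinely determine $R_n^{s,t}$ uniquely (so the two-term form is forced, not merely consistent), and that $\mathcal{L}_{s,t}(P_{n-1}^{s,t})\neq0$, equivalently $\sigma_{n-1}^{s,t}\neq0$, so that $\gamma_{n,n-1}$ is well defined — which is exactly the nonvanishing making $d_n^{s,t}$ and $e_n^{s,t}$ meaningful.
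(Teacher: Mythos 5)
Your proof is correct, and its skeleton matches the paper's: expand $R_n^{s,t}$ in the basis $\{P_k^{s,t}\}_{k\geq 0}$, use the relations \eqref{propr} to force the coefficients of $P_0^{s,t},\dots,P_{n-2}^{s,t}$ to vanish, identify the surviving two-term expression, and then combine it with Proposition \ref{prop2.8} to obtain \eqref{trans2}. The one place you genuinely diverge is in pinning down the last coefficient. The paper computes it as the Fourier coefficient $\langle R_n^{s,t},y^{n-1}\rangle_{s,t}(h_{n-1}^{s,t})^{-1}$ and then asserts the value $e_n^{s,t}$ ``by comparing the coefficient,'' which in effect requires evaluating that inner product from the determinant expression of $R_n^{s,t}$ — a step the paper leaves implicit. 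You instead invoke the second condition in \eqref{propr}, namely $\mathcal{L}_{s,t}(R_n^{s,t})=0$, applied to the two-term expansion, together with the identity $\sigma_n^{s,t}=\tau_n^{s,t}\mathcal{L}_{s,t}(P_n^{s,t})$ recorded before Proposition \ref{prop2.8}; this yields $\gamma_{n,n-1}=-\sigma_n^{s,t}\tau_{n-1}^{s,t}/(\sigma_{n-1}^{s,t}\tau_n^{s,t})=e_n^{s,t}$ in one line. Your route is the cleaner of the two: it makes explicit precisely the step the paper glosses over, and it uses only quantities already defined in the text, at the mild cost of needing $\sigma_{n-1}^{s,t}\neq 0$ — a nonvanishing condition that is in any case implicit in the very definitions of $d_n^{s,t}$ and $e_n^{s,t}$, as you correctly note.
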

\begin{proof}
It should be noted that $\{P_n^{s,t}(x)\}_{n\in\mathbb{N}}$ form a set of basis for polynomials, and thus
\begin{align*}
R_n^{s,t}(x)=P_n^{s,t}(x)+\sum_{k=0}^{n-1}\alpha_k^{s,t}P_k^{s,t}(x).
\end{align*}
The property of $R_n^{s,t}$ (c.f equation \eqref{propr}) leads to the result
\begin{align*}
\alpha_k^{s,t}=0,\quad 0\leq k\leq n-2, \qquad \alpha_{n-1}^{s,t}=\langle R_n^{s,t}(x),y^{n-1}\rangle_{s,t} (h_{n-1}^{s,t})^{-1}.
\end{align*}
Moreover, by comparing the coefficient, we obtain
\begin{align*}
R_n^{s,t}(x)=P_n^{s,t}(x)+e_{n}^{s,t}P_{n-1}^{s,t}(x), 
\end{align*}
which results in the result by using Proposition \ref{prop2.8}.
\end{proof}
\subsection{Algebraic construction of an integrable hierarchy}
In this part, we first reformulate the above discrete spectral transformations, and then give a discrete integrable hierarchy by using compatibility conditions. Let $\Lambda$ be a shift operator and $\Phi^{s,t}$ be a semi-infinite column vector generated by polynomials $P_0^{s,t}(x)$, $P_1^{s,t}(x)$, $\cdots$. With these notations, the four-term recurrence relation could be reformulated as
\begin{align*}
&x\Phi^{s,t}=\mathcal{L}^{s,t}\Phi^{s,t},\\
&\mathcal{L}=(I+A^{s,t}\Lambda^{-1})^{-1}(\Lambda+A^{s,t}-B^{s,t}-C^{s,t}\Lambda^{-1}+A^{s,t}\Lambda^{-1}B^{s,t}-A^{s,t}\Lambda^{-1}C^{s,t}\Lambda^{-1}),
\end{align*}
where 
$I$ is the identity matrix and $A^{s,t}=\text{diag}(a_0^{s,t},a_1^{s,t},\cdots)$, $B^{s,t}=\text{diag}(b_0^{s,t},b_1^{s,t},\cdots)$ and $C^{s,t}=\text{diag}(c_0^{s,t},c_1^{s,t},\cdots)$. Similarly, discrete spectral transformations \eqref{spec1} and \eqref{trans2} could be written as
\begin{align*}
x\Phi^{s+1,t}=\mathcal{N}^{s,t}\Phi^{s,t},\quad \Phi^{s,t+1}=\mathcal{M}^{s,t}\Phi^{s,t},
\end{align*}
where 
\begin{align*}
\mathcal{N}^{s,t}=(I+\alpha^{s,t}\Lambda^{-1})^{-1}(\Lambda+\beta^{s,t}),\quad 
\mathcal{M}^{s,t}=(I+E^{s,t}\Lambda^{-1})^{-1}(I+D^{s,t}\Lambda^{-1}),
\end{align*}
and 
\begin{align*}
&\alpha^{s,t}=\text{diag}(\alpha_0^{s,t},\alpha_1^{s,t},\cdots),\quad
\beta^{s,t}=\text{diag}(\beta_0^{s,t},\beta_1^{s,t},\cdots),\\
&D^{s,t}=\text{diag}(d_0^{s,t},d_1^{s,t},\cdots),\quad E^{s,t}=\text{diag}(e_0^{s,t},e_1^{s,t},\cdots).
\end{align*}
By using compatibility conditions, we have the following integrable hierarchy.
\begin{theorem}
There is an integrable hierarchy given by 
\begin{align}\label{cp}
\mathcal{M}^{s+1,t}\mathcal{N}^{s,t+1}=\mn^{s,t}\mm^{s,t},\quad \ml^{s+1,t}\mn^{s,t}=\mn^{s,t}\ml^{s,t},\quad \mm^{s,t}\ml^{s,t+1}=\ml^{s,t}\mm^{s,t}.
\end{align}
\end{theorem}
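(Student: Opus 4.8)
The plan is to read the three operator identities in \eqref{cp} as the zero-curvature (compatibility) conditions of the overdetermined linear system
\[x\Phi^{s,t}=\ml^{s,t}\Phi^{s,t},\qquad x\Phi^{s+1,t}=\mn^{s,t}\Phi^{s,t},\qquad \Phi^{s,t}=\mm^{s,t}\Phi^{s,t+1},\]
where the third relation is taken as the operator form of \eqref{trans2} (so $\mm^{s,t}$ advances $t$ backwards, which is consistent with the order of $D^{s,t}$ and $E^{s,t}$ in its definition). Each identity will be produced by evaluating one and the same object -- a copy of $\Phi$ that has been shifted in $(s,t)$ and/or multiplied by the spectral variable $x$ -- along two different routes through the lattice, and then equating. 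Throughout I use the single essential fact that $x$ is a scalar and hence commutes with every operator built from the shift $\Lambda^{\pm1}$ and the $(s,t)$-dependent diagonal matrices $A,B,C,D,E,\alpha,\beta$.

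For the $\ml$--$\mn$ identity (the second relation in \eqref{cp}) I would apply $x$ to both sides of $x\Phi^{s+1,t}=\mn^{s,t}\Phi^{s,t}$: on the left $x(x\Phi^{s+1,t})=x\,\ml^{s+1,t}\Phi^{s+1,t}=\ml^{s+1,t}\mn^{s,t}\Phi^{s,t}$ (four-term recurrence at level $(s+1,t)$, then commute $x$ through $\ml^{s+1,t}$), while on the right $x\,\mn^{s,t}\Phi^{s,t}=\mn^{s,t}(x\Phi^{s,t})=\mn^{s,t}\ml^{s,t}\Phi^{s,t}$. For the $\mm$--$\ml$ identity (third in \eqref{cp}) I would apply $x$ to $\Phi^{s,t}=\mm^{s,t}\Phi^{s,t+1}$, giving $\ml^{s,t}\mm^{s,t}\Phi^{s,t+1}=x\Phi^{s,t}=\mm^{s,t}(x\Phi^{s,t+1})=\mm^{s,t}\ml^{s,t+1}\Phi^{s,t+1}$. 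For the mixed $\mm$--$\mn$ identity (first in \eqref{cp}) I would go around the unit cell: on one hand $x\Phi^{s+1,t}=\mn^{s,t}\Phi^{s,t}=\mn^{s,t}\mm^{s,t}\Phi^{s,t+1}$, and on the other hand, using $\Phi^{s+1,t}=\mm^{s+1,t}\Phi^{s+1,t+1}$ together with $x\Phi^{s+1,t+1}=\mn^{s,t+1}\Phi^{s,t+1}$, one gets $x\Phi^{s+1,t}=\mm^{s+1,t}(x\Phi^{s+1,t+1})=\mm^{s+1,t}\mn^{s,t+1}\Phi^{s,t+1}$. Equating the two routes yields the three asserted relations, each in the form $U\Phi^{\bullet}=V\Phi^{\bullet}$.

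The remaining step is to upgrade each identity $U\Phi^{\bullet}=V\Phi^{\bullet}$ to an operator identity $U=V$. Here $U,V$ are operators banded in $\Lambda^{\pm1}$ whose entries depend only on $(s,t)$, and $\Phi^{\bullet}$ is the column of monic polynomials $P^{\bullet}_n(x)$. Reading off the $n$-th component gives $\sum_m U_{nm}P^{\bullet}_m(x)=\sum_m V_{nm}P^{\bullet}_m(x)$ for all $x$; since the $P^{\bullet}_m$ have strictly increasing degree they are linearly independent, so $U_{nm}=V_{nm}$ for every $m$, whence $U=V$. This is legitimate because all factors live in the associative algebra of doubly-banded operators in $\Lambda^{\pm1}$, where each $(I+A\Lambda^{-1})^{-1}$ is the well-defined formal lower-triangular series $\sum_{k\ge0}(-A\Lambda^{-1})^k$, so every composition above is meaningful.

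The computations are otherwise entirely formal, so I do not expect a serious analytic difficulty. The main (if modest) obstacle is the bookkeeping of the parameter shifts on the superscripts, and in particular orienting the $t$-transformation as $\Phi^{s,t}=\mm^{s,t}\Phi^{s,t+1}$ rather than its inverse: the opposite orientation shifts the superscripts and would produce the identities at the wrong indices. Getting this orientation and the placement of the $x$-multiplication correct in each of the three two-route comparisons is the only point that genuinely requires care.
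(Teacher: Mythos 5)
Your proof is correct and is exactly the compatibility-condition argument the paper intends — the paper offers no details beyond the phrase ``by using compatibility conditions,'' and your write-up supplies precisely those details: evaluate a shifted copy of $\Phi$ along two lattice routes, equate, and promote $U\Phi^{\bullet}=V\Phi^{\bullet}$ to $U=V$ via row-finiteness of the operators and linear independence of the monic polynomials. Your orientation $\Phi^{s,t}=\mm^{s,t}\Phi^{s,t+1}$ is also the right call: it is the only reading consistent with the paper's explicit formula $\mm^{s,t}=(I+E^{s,t}\Lambda^{-1})^{-1}(I+D^{s,t}\Lambda^{-1})$ coming from \eqref{trans2}, and it is what makes all three identities come out with the index placement stated in \eqref{cp}, so the paper's own line $\Phi^{s,t+1}=\mm^{s,t}\Phi^{s,t}$ should be regarded as a slip that you have silently corrected.
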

By recognizing that 
\begin{align*}
&\ml^{s,t}=\Lambda-B^{s,t}-C^{s,t}\Lambda^{-1}+O(\Lambda^{-2}),\\
&\mm^{s,t}=I+(D^{s,t}-E^{s,t})\Lambda^{-1}+E^{s,t}\Lambda^{-1}(E^{s,t}-D^{s,t})\Lambda^{-1}+O(\Lambda^{-3}),\\
&\mn^{s,t}=\Lambda+(\beta^{s,t}-\alpha^{s,t})+\alpha^{s,t}\Lambda^{-1}(\alpha^{s,t}-\beta^{s,t})+O(\Lambda^{-2}),
\end{align*}
we can obtain the following nonlinear difference equation from \eqref{cp} which reads
\begin{align*}
&g_n^{s+1,t}+f_n^{s,t+1}=f_n^{s,t}+g_{n+1}^{s,t},\\
&f_{n+1}^{s,t}-b_n^{s+1,t}=f_n^{s,t}-b_{n+1}^{s,t},\\
&g_n^{s,t}-b_n^{s,t+1}=g_{n+1}^{s,t}-b_n^{s,t}\\
&(g_n^{s+1,t}-\alpha_n^{s,t+1})f_{n-1}^{s,t+1}-e_n^{s+1,t}g_{n-1}^{s+1,t}=(g_{n-1}^{s,t}-\alpha_n^{s,t})f_{n-1}^{s,t}-e_n^{s,t}g_{n-1}^{s,t},\\
&c_n^{s+1,t}+b_n^{s+1,t}f_n^{s,t}-\alpha_{n+1}^{s,t}f_n^{s,t}=c_{n+1}^{s,t}-f_n^{s,t}b_n^{s,t}-\alpha_n^{s,t}f_{n-1}^{s,t},\\
&c_{n-1}^{s,t+1}-g_n^{s,t}b_{n-1}^{s,t+1}+e_n^{s,t}g_{n-1}^{s,t}=c_{n-1}^{s,t}-g_{n-1}^{s,t}b_{n-1}^{s,t}+e_{n-1}^{s,t}g_{n-1}^{s,t},
\end{align*}
where we use notations $f_n^{s,t}=\beta_n^{s,t}-\alpha_n^{s,t}$ and $g_n^{s,t}=d_n^{s,t}-e_n^{s,t}$.
These equations with unknown variables $b_n^{s,t},\,c_n^{s,t},\,d_n^{s,t},\,e_n^{s,t},\alpha_n^{s,t},\,\beta_n^{s,t}$ form an integrable system from compatibility conditions. However, the determinant expressions of these variables are only related to $\tau_n^{s,t},\,\tit_n^{s,t},\,\xi_n^{s,t},\,\sigma_n^{s,t}$. It means that there are some variables could be eliminated, from which a simpler equation could be obtained. In the subsequent, we would derive an equivalent equation expressed by $\tau_n^{s,t}$ and show that the corresponding discrete integrable system is exactly the discrete CKP equation.
\begin{remark}
Note that in the semi-discrete C-Toda equation, the nonlinear variables are expressed by $a_n\,b_n$ and $c_n$. Therefore, we can see the differences between continuous time evolutions and discrete time evolutions in the discrete setting. 
\end{remark}

\section{The discrete CKP equation}\label{sec3}
In last section, we showed an overdetermined system from the compatibility conditions of spectral transformations. In this section, we show that there are some simpler relations between variables which could be obtained directly from discrete spectral transformations.
The first relation between $\tau_n^{s,t}$ and $\xi_n^{s,t}$ comes from the discrete spectral transformation \eqref{spec1}.
\begin{proposition}
We have the formula
\begin{align}\label{eq1}
\tau_{n+1}^{s,t}\tau_{n-1}^{s+1,t}=\tau_n^{s,t}\tau_n^{s+1,t}-(\xi_n^{s,t})^2.
\end{align}
\end{proposition}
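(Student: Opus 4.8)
The plan is to read the required identity off the discrete spectral transformation \eqref{spec1} by pairing it against a single monomial, and then to substitute the explicit ratio formulas for $\alpha_n^{s,t}$ and $\beta_n^{s,t}$. Concretely, I would apply the bilinear form $\langle\,\cdot\,,y^n\rangle_{s,t}$ to both sides of
\[
x\left(P_n^{s+1,t}(x)+\alpha_n^{s,t}P_{n-1}^{s+1,t}(x)\right)=P_{n+1}^{s,t}(x)+\beta_n^{s,t}P_n^{s,t}(x).
\]
On the right-hand side, orthogonality of $\{P_k^{s,t}\}$ kills the $P_{n+1}^{s,t}$ term and leaves $\beta_n^{s,t}h_n^{s,t}$. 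On the left-hand side, the shift property $\langle xf(x),y^n\rangle_{s,t}=\langle f(x),y^{n-1}\rangle_{s+1,t}$ (item (1) of the evolution properties) turns the pairing into one in the $(s+1,t)$-inner product; orthogonality of $\{P_k^{s+1,t}\}$ then annihilates the $P_n^{s+1,t}$ contribution and leaves $\alpha_n^{s,t}h_{n-1}^{s+1,t}$.

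Equating the two sides gives the single scalar relation $\alpha_n^{s,t}h_{n-1}^{s+1,t}=\beta_n^{s,t}h_n^{s,t}$. Using $h_n^{s,t}=\tau_{n+1}^{s,t}/\tau_n^{s,t}$ and likewise $h_{n-1}^{s+1,t}=\tau_n^{s+1,t}/\tau_{n-1}^{s+1,t}$, this rearranges to
\[
\frac{\alpha_n^{s,t}}{\beta_n^{s,t}}=\frac{\tau_{n+1}^{s,t}\tau_{n-1}^{s+1,t}}{\tau_n^{s,t}\tau_n^{s+1,t}}.
\]
It then remains to compute the left-hand ratio from the known expressions. Since $\alpha_n^{s,t}=\beta_n^{s,t}-\xi_n^{s,t}\xi_{n+1}^{s,t}/(\tau_n^{s+1,t}\tau_{n+1}^{s,t})$, dividing by $\beta_n^{s,t}=\xi_{n+1}^{s,t}\tau_n^{s,t}/(\tau_{n+1}^{s,t}\xi_n^{s,t})$ collapses the quotient to $1-(\xi_n^{s,t})^2/(\tau_n^{s,t}\tau_n^{s+1,t})$, and clearing the denominator $\tau_n^{s,t}\tau_n^{s+1,t}$ yields \eqref{eq1} exactly.

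Alternatively --- and this is the route I would trust most --- the identity is a pure determinant statement, independent of the polynomials. The moment matrix $(m_{i,j}^{s,t})$ is symmetric because the weight and Cauchy kernel are symmetric under $x\leftrightarrow y$, and $m_{i,j}^{s+1,t}=m_{i+1,j+1}^{s,t}$. Applying the Jacobi determinant identity (Dodgson condensation) to the $(n+1)\times(n+1)$ block $(m_{i,j}^{s,t})_{i,j=0}^{n}$, with the distinguished rows and columns taken to be the first and last, identifies the full determinant with $\tau_{n+1}^{s,t}$, the doubly bordered minor with $\tau_{n-1}^{s+1,t}$, the two corner principal minors with $\tau_n^{s,t}$ and $\tau_n^{s+1,t}$, and the two mixed minors with $\xi_n^{s,t}$. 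This reproduces \eqref{eq1} in one line.

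The bookkeeping is the only place to be careful: in the first approach one must track the range of indices for which orthogonality applies and handle the passage between the $(s,t)$- and $(s+1,t)$-inner products via the shift property; in the second approach the one subtle point is recognizing, through the symmetry $m_{i,j}^{s,t}=m_{j,i}^{s,t}$, that the two off-diagonal minors coincide and both equal $\xi_n^{s,t}$. Neither step is deep, so I expect the main obstacle to be organizational rather than conceptual.
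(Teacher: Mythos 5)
Your first route is exactly the paper's proof: the authors apply $\langle\cdot,y^n\rangle_{s,t}$ to \eqref{spec1}, use orthogonality together with the shift property of the inner product to obtain the scalar relation $\alpha_n^{s,t}h_{n-1}^{s+1,t}=\beta_n^{s,t}h_n^{s,t}$, and then substitute the explicit expressions for $\alpha_n^{s,t}$, $\beta_n^{s,t}$ and $h_n^{s,t}=\tau_{n+1}^{s,t}/\tau_n^{s,t}$; your computation of the ratio $\alpha_n^{s,t}/\beta_n^{s,t}=1-(\xi_n^{s,t})^2/(\tau_n^{s,t}\tau_n^{s+1,t})$ is that same substitution written out, and it is correct. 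Your second route --- Dodgson condensation on the $(n+1)\times(n+1)$ moment matrix, using the symmetry $m_{i,j}^{s,t}=m_{j,i}^{s,t}$ and $m_{i+1,j+1}^{s,t}=m_{i,j}^{s+1,t}$ to identify the two off-corner minors with $\xi_n^{s,t}$ and the central minor with $\tau_{n-1}^{s+1,t}$ --- is also sound, and the paper itself endorses it later: equation \eqref{e1} is precisely \eqref{eq1}, and the authors remark there that it ``could be verified by using Jacobi's identity for the first and last rows and columns to the determinant representation of $\tau_{n+1}^{s,t}$.'' The determinantal argument has the advantage of being independent of the spectral transformation \eqref{spec1} (hence free of any orthogonality bookkeeping), while the paper's route shows the identity arising naturally as a compatibility consequence of the transformation itself.
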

\begin{proof}
By acting an inner product $\langle \cdot, y^{n}\rangle_{s,t}$ on both sides of \eqref{spec1}, we could get $
\alpha_n^{s,t}h_{n-1}^{s+1,t}=\beta_n^{s,t}h_n^{s,t}
$. Then the equation \eqref{eq1} is a direct result by substituting $\alpha^{s,t}_n$ and $\beta_n^{s,t}$ into this formula. 
\end{proof}
\begin{remark}
By using this proposition, $\alpha_n^{s,t}$ in \eqref{spec1} could be simplified as
\begin{align*}
\alpha_n^{s,t}=\frac{\xi_{n+1}^{s,t}\tau_{n-1}^{s+1,t}}{\xi_n^{s,t}\tau_n^{s+1,t}}.
\end{align*}
\end{remark}
Moreover, we have the following bilinear relations from discrete spectral transformations \eqref{spec1} and \eqref{trans2}.
\begin{proposition}
We have equations
\begin{subequations}
\begin{align}
&\tau_{n+1}^{s,t+1}\tau_{n}^{s,t}-\tau_{n}^{s,t+1}\tau_{n+1}^{s,t}=(\sigma_{n}^{s,t})^2,\label{3.2a}\\
&\tau_n^{s,t}\tau_{n}^{s+1,t+1}-\tau_{n+1}^{s,t}\tau_{n-1}^{s+1,t+1}=\xi_n^{s,t+1}\xi_n^{s,t}-\sigma_{n-1}^{s+1,t}\sigma_n^{s,t}.\label{3.2b}
\end{align}
\end{subequations}
\end{proposition}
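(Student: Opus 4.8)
The plan is to treat both \eqref{3.2a} and \eqref{3.2b} as bilinear (Jacobi/Plücker-type) identities among the tau-functions, and to derive them from the two moment evolution rules $m_{i,j}^{s+1,t}=m_{i+1,j+1}^{s,t}$ and $m_{i,j}^{s,t+1}=m_{i,j}^{s,t}-\phi_i^{s,t}\phi_j^{s,t}$ together with the identities $h_n^{s,t}=\tau_{n+1}^{s,t}/\tau_n^{s,t}$ and $\sigma_n^{s,t}=\tau_n^{s,t}\mathcal{L}_{s,t}(P_n^{s,t})$. There are two equivalent roads: a determinantal one, in which each tau-function at level $t+1$ is rewritten as a $\phi$-bordered determinant of $M^{s,t}:=(m_{i,j}^{s,t})$ (exactly as in the proof of Proposition \ref{prop2.8}) and a Desnanot--Jacobi (Lewis Carroll) identity is applied; and an orthogonal-polynomial one, in which the relevant polynomial expansions are tested against monomials under the appropriate inner products. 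I would present the second for \eqref{3.2a} and the first for \eqref{3.2b}.

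For \eqref{3.2a}, dividing by $\tau_n^{s,t}\tau_n^{s,t+1}$ shows it is equivalent to $h_n^{s,t+1}-h_n^{s,t}=\pm(\sigma_n^{s,t})^2/(\tau_n^{s,t}\tau_n^{s,t+1})$. First I would expand the monic polynomial $P_n^{s,t+1}$ in the $(s,t)$-orthogonal basis; the second inner-product property $\langle\cdot,\cdot\rangle_{s,t+1}=\langle\cdot,\cdot\rangle_{s,t}-\mathcal{L}_{s,t}(\cdot)\mathcal{L}_{s,t}(\cdot)$ forces every expansion coefficient to be a multiple of $\mathcal{L}_{s,t}(P_n^{s,t+1})$. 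Evaluating $\mathcal{L}_{s,t}$ on that expansion, and using the matrix-determinant-lemma value $\tau_n^{s,t+1}=\tau_n^{s,t}\bigl(1-\sum_{k<n}\mathcal{L}_{s,t}(P_k^{s,t})^2/h_k^{s,t}\bigr)$ of the rank-one update, yields the compact relations
\begin{align*}
\mathcal{L}_{s,t}(P_n^{s,t+1})=\frac{\tau_n^{s,t}}{\tau_n^{s,t+1}}\mathcal{L}_{s,t}(P_n^{s,t}),\qquad h_n^{s,t+1}=h_n^{s,t}-\mathcal{L}_{s,t}(P_n^{s,t})\,\mathcal{L}_{s,t}(P_n^{s,t+1}).
\end{align*}
Substituting the first relation into the second and using $\sigma_n^{s,t}=\tau_n^{s,t}\mathcal{L}_{s,t}(P_n^{s,t})$ reproduces \eqref{3.2a} (up to the overall sign addressed below). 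Equivalently, writing $\tau_{n+1}^{s,t+1}$ as the $(n+2)\times(n+2)$ $\phi$-bordered determinant of $M^{s,t}$ and condensing on its first and last rows and columns produces the four corner minors $\tau_{n+1}^{s,t+1},\tau_n^{s,t},\tau_{n+1}^{s,t},\tau_n^{s,t+1}$ together with the two off-diagonal minors, each equal to $\pm\sigma_n^{s,t}$ by symmetry of $M^{s,t}$; this is \eqref{3.2a} again.

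Equation \eqref{3.2b} is the main obstacle. The key observation is that its left-hand side is exactly the $t\mapsto t+1$ deformation of the identity \eqref{eq1}, which I would rewrite as $\tau_n^{s,t}\tau_n^{s+1,t}-\tau_{n+1}^{s,t}\tau_{n-1}^{s+1,t}=(\xi_n^{s,t})^2$. Accordingly I would run a single Desnanot--Jacobi identity, but now on a doubly structured minor of $M^{s,t}$: the rank-one $\phi$-update carries the $t\mapsto t+1$ shift (and is responsible for the $\sigma$-minors), while the index shift $m_{i,j}^{s+1,t}=m_{i+1,j+1}^{s,t}$ and the column shift defining $\xi_n^{s,t}=\det(m_{i,j+1}^{s,t})$ are responsible for the $\xi$-minors. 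Concretely, I would represent $\tau_n^{s+1,t+1}$ and $\tau_{n-1}^{s+1,t+1}$ as $\phi$-bordered, index-shifted principal minors of $M^{s,t}$, choose the two rows/columns to delete so that the two diagonal products reproduce the left-hand side of \eqref{3.2b}, and then identify the two off-diagonal minors as $\xi_n^{s,t+1}\xi_n^{s,t}$ and $\sigma_{n-1}^{s+1,t}\sigma_n^{s,t}$.

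The hard part is precisely this matching: I must select the bordered matrix (and the pair of deleted lines) so that condensation returns these four specific products rather than some neighbouring minors, and then verify that the combination of the $\phi$-border with the shift $m_{i,j}^{s+1,t}=m_{i+1,j+1}^{s,t}$ genuinely collapses the off-diagonal minors to $\xi$ and $\sigma$ at the stated indices and levels. Tracking the $\phi$-column through the index shift, using $\phi_i^{s+1,t}=\phi_{i+1}^{s,t}$, is where the bookkeeping is delicate, and the overall signs on both sides I would pin down by checking the base case $n=1$ directly from the integral definitions of $m_{i,j}^{s,t}$, $\phi_i^{s,t}$ and $\sigma_n^{s,t}$.
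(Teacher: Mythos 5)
Your treatment of \eqref{3.2a} is correct and complete, but it follows a genuinely different route from the paper's. The paper never touches the rank-one update directly: it acts $\mathcal{L}_{s,t}$ on the spectral transformation \eqref{spec1} and evaluates \eqref{trans2} at $x=0$, obtaining two trilinear identities; it splits each into bilinear ones by introducing the bordered determinant $\psi_n^{s,t}$; and it then recovers \eqref{3.2a} from the resulting identity $\xi_{n+1}^{s,t+1}\xi_{n}^{s,t}-\xi_{n}^{s,t+1}\xi_{n+1}^{s,t}=(\psi_{n}^{s,t})^2$ by undoing the column-shift substitution \eqref{subs}. Your argument (expanding $P_n^{s,t+1}$ in the $(s,t)$-basis, where orthogonality under the rank-one-updated pairing forces the coefficients to be proportional to $\mathcal{L}_{s,t}(P_n^{s,t+1})$, plus the matrix-determinant lemma; equivalently one Desnanot--Jacobi condensation of the $\phi$-bordered matrix) is shorter and self-contained, and it is essentially the ``direct verification by Jacobi's determinant identity'' that the paper itself invokes later for \eqref{e2}. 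One caveat: your derivation yields $\tau_{n+1}^{s,t+1}\tau_n^{s,t}-\tau_{n+1}^{s,t}\tau_n^{s,t+1}=-(\sigma_n^{s,t})^2$, which agrees with \eqref{e2} (after $n\to n+1$) and with the case $n=0$, where $\tau_1^{s,t+1}-\tau_1^{s,t}=-(\phi_0^{s,t})^2$, but is opposite in sign to \eqref{3.2a} as printed; so your plan to pin down signs by a base case is not a formality---it exposes an inconsistency between \eqref{3.2a} and \eqref{e2} in the paper itself.

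For \eqref{3.2b}, however, there is a genuine gap, and it is not merely the bookkeeping you flag as ``delicate''. Any single Desnanot--Jacobi (or general two-line Jacobi) condensation produces an identity with exactly three bilinear products,
\begin{align*}
\det A\cdot\det A^{i,j}_{k,l}=\det A^{i}_{k}\,\det A^{j}_{l}-\det A^{i}_{l}\,\det A^{j}_{k},
\end{align*}
whereas \eqref{3.2b} contains four distinct products, two on each side. Hence no choice of bordered, index-shifted matrix and deleted lines can return \eqref{3.2b} in one condensation: the identity is not of Jacobi type, and the plan as you describe it cannot be completed. What is needed is a mechanism for splicing together several three-term identities, and this is exactly what the paper supplies: it multiplies the bilinear relation \eqref{3.3b} by $\psi_{n-1}^{s+1,t}$, eliminates the products $\sigma_n^{s+1,t}\psi_{n-1}^{s+1,t}$ and $\sigma_{n-1}^{s+1,t}\psi_{n-1}^{s+1,t}$ using \eqref{3.4a}--\eqref{3.4b} shifted to $s+1$, invokes \eqref{eq1} applied to the column-shifted moment matrix (which brings in the new quantity $\hat{\tau}_n^{s,t}=\det(m_{i,j+2}^{s,t})_{i,j=0}^{n-1}$), and finally undoes the substitution \eqref{subs} to convert the resulting identity in $\xi,\psi,\hat{\tau}$ into \eqref{3.2b}. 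The auxiliary variables $\psi_n^{s,t}$ and $\hat{\tau}_n^{s,t}$, and the forward-and-backward shift argument, are precisely the ideas missing from your proposal; without them (or some equivalent way of combining two Jacobi identities), your proof of \eqref{3.2b} does not go through.
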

\begin{proof}
From equation \eqref{spec1}, if we act the linear functional $\mathcal{L}_{s,t}$ on both sides, and note that
\begin{align*}
\mathcal{L}_{s,t}\left(
xP_n^{s+1,t}(x)
\right)=\mathcal{L}_{s+1,t}\left(P_n^{s+1,t}(x)\right)= \frac{\sigma_n^{s+1,t}}{\tau_n^{s+1,t}},\quad \mathcal{L}_{s,t}\left(P_n^{s,t}(x)\right)= \frac{\sigma_n^{s,t}}{\tau_n^{s,t}}, 
\end{align*}
then the equality could be written in terms of a trilinear equation
\begin{align}\label{tri1}
\sigma_n^{s+1,t}\xi_n^{s,t}\tau_{n+1}^{s,t}+\xi_{n+1}^{s,t}\sigma_{n-1}^{s+1,t}\tau_{n+1}^{s,t}=\sigma_{n+1}^{s,t}\tau_n^{s+1,t}\xi_n^{s,t}+\xi_{n+1}^{s,t}\sigma_n^{s,t}\tau_n^{s+1,t}.
\end{align}
Such a trilinear form could be decomposed into two bilinear equations\footnote{It should be remarked that there are many different bilinear relations could be obtained from the trilinear equation. For example, we have alternative formulas
\begin{align*}
&\sigma_n^{s+1,t}\tau_{n+1}^{s,t}-\sigma_{n+1}^{s,t}\tau_n^{s+1,t}=\xi_{n+1}^{s,t}\psi_n^{s,t}\\
&\sigma_n^{s,t}\tau_n^{s+1,t}-\sigma_{n-1}^{s+1,t}\tau_{n+1}^{s,t}=\xi_n^{s,t}\psi_n^{s,t}.
\end{align*}
However, these formulas are useless to derive a simpler equation.}
\begin{subequations}
\begin{align}
&\sigma_{n+1}^{s,t}\xi_n^{s,t}+\xi_{n+1}^{s,t}\sigma_n^{s,t}=\tau_{n+1}^{s,t}\psi_n^{s,t},\label{3.3a}\\
&\sigma_{n}^{s+1,t}\xi_n^{s,t}+\sigma_{n-1}^{s+1,t}\xi_{n+1}^{s,t}=\tau_n^{s+1,t}\psi_{n}^{s,t}\label{3.3b}
\end{align}
\end{subequations}
by introducing a variable 
\begin{align*}
\psi_n^{s,t}=\left|\begin{array}{cccc}
m_{0,1}^{s,t}&\cdots&m_{0,n}^{s,t}&\phi_0^{s,t}\\
m_{1,1}^{s,t}&\cdots&m_{1,n}^{s,t}&\phi_1^{s,t}\\
\vdots&&\vdots&\vdots\\
m_{n,1}^{s,t}&\cdots&m_{n,n}^{s,t}&\phi_{n}^{s,t}
\end{array}
\right|.
\end{align*}
On the other hand, from equation \eqref{trans2} and by noting that
\begin{align*}
P_n^{s,t}(0)=(-1)^{n}\frac{\xi_{n}^{s,t}}{\tau_n^{s,t}},
\end{align*}
we could get another trilinear equation
\begin{align}\label{tri2}
\tau_n^{s,t}\sigma_{n-1}^{s,t}\xi_{n}^{s,t+1}+\sigma_n^{s,t}\xi_{n-1}^{s,t+1}\tau_n^{s,t}=\xi_n^{s,t}\tau_{n}^{s,t+1}\sigma_{n-1}^{s,t}+\sigma_n^{s,t}\xi_{n-1}^{s,t}\tau_n^{s,t+1}.
\end{align}
We can decompose this trilinear equation into
\begin{subequations}
\begin{align}
&\tau_n^{s,t+1}\xi_n^{s,t}-\tau_n^{s,t}\xi_n^{s,t+1}=-\sigma_n^{s,t}\psi_{n-1}^{s,t},\label{3.4a}\\
&\tau_n^{s,t}\xi_{n-1}^{s,t+1}-\tau_n^{s,t+1}\xi_{n-1}^{s,t}=-\sigma_{n-1}^{s,t}\psi_{n-1}^{s,t}.\label{3.4b}
\end{align}
\end{subequations}
By multiplying $\psi_{n}^{s,t}$ on \eqref{3.3a}, and taking a substitution of \eqref{3.4a} and \eqref{3.4b}, we get the formula
\begin{align*}
\xi_{n+1}^{s,t+1}\xi_{n}^{s,t}-\xi_{n}^{s,t+1}\xi_{n+1}^{s,t}=(\psi_{n}^{s,t})^2.
\end{align*}
It should be noted that 
\begin{align}\label{subs}
\begin{aligned}
&\xi_n^{s,t}=\tau_n^{s,t}|_{m_{i,j}^{s,t}\to m_{i,j+1}^{s,t}},\qquad \psi_n^{s,t}=\sigma_n^{s,t}|_{m^{s,t}_{i,j}\to m^{s,t}_{i,j+1}}
\end{aligned}
\end{align} 
and thus we can get equation \eqref{3.2a} by replacing $\xi_n^{s,t}$ and $\psi_n^{s,t}$ by $\tau_n^{s,t}$ and $\sigma_n^{s,t}$ respectively.

If we multiply $\psi_{n-1}^{s+1,t}$ on \eqref{3.3b}, then we can get
\begin{align*}
\tau_n^{s+1,t}\left(
\xi_n^{s+1,t+1}\xi_n^{s,t}-\xi_{n+1}^{s,t}\xi_{n-1}^{s+1,t+1}-\psi_{n-1}^{s+1,t}\psi_n^{s,t}
\right)+\tau_n^{s+1,t+1}\left(
\xi_{n+1}^{s,t}\xi_{n-1}^{s+1,t}-\xi_n^{s,t}\xi_n^{s+1,t}
\right)=0.
\end{align*}
From \eqref{eq1}, it is known that
\begin{align*}
\xi_{n+1}^{s,t}\xi_{n-1}^{s+1,t}-\xi_n^{s,t}\xi_n^{s+1,t}=\tau_n^{s+1,t}\hat{\tau}_n^{s,t},
\end{align*}
where $\hat{\tau}_n^{s,t}=\det(m_{i,j+2}^{s,t})_{i,j=0}^{n-1}$.
Thus we get 
\begin{align*}
\xi_n^{s+1,t+1}\xi_n^{s,t}-\xi_{n+1}^{s,t}\xi_{n-1}^{s+1,t+1}=\tau_n^{s+1,t+1}\hat{\tau}_n^{s,t}-\psi_{n-1}^{s+1,t}\psi_n^{s,t}.
\end{align*}

If we take the substitution \eqref{subs} backwards and note that when shift $m_{i,j+1}^{s,t}$ to $m_{i,j}^{s,t}$, $\hat{\tau}_n^{s,t}$ turns into $\xi_n^{s,t}$ and $\tau_n^{s+1,t+1}$ turns into $\xi_n^{s,t+1}$, then we get the equation \eqref{3.2b}.
\end{proof}
By using equations \eqref{eq1}, \eqref{3.2a} and \eqref{3.2b} and eliminating $\sigma_n^{s,t}$ and $\xi_n^{s,t}$, there is a discrete equation governed by $\tau_n^{s,t}$, which gives the following theorem.
\begin{theorem}
The normalization factor $\tau_n^{s,t}$ satisfies the following discrete CKP equation
\begin{align}\label{dckp}
\begin{aligned}
&4\left(
\tau_{n}^{s+1,t}\tau_n^{s,t}-\tau_{n+1}^{s,t}\tau_{n-1}^{s+1,t}
\right)\left(
\tau_n^{s+1,t+1}\tau_n^{s,t+1}-\tau_{n+1}^{s,t+1}\tau_{n-1}^{s+1,t+1}
\right)\\
&\qquad=\left(
\tau_n^{s+1,t}\tau_n^{s,t+1}+\tau_n^{s+1,t+1}\tau_n^{s,t}-\tau_{n+1}^{s,t+1}\tau_{n-1}^{s+1,t}-\tau_{n+1}^{s,t}\tau_{n-1}^{s+1,t+1}
\right)^2.
\end{aligned}
\end{align}
\end{theorem}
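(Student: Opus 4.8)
The plan is to convert both sides of \eqref{dckp} into expressions in the auxiliary determinants $\xi_n^{s,t}$ and $\sigma_n^{s,t}$, after which the identity collapses to a perfect-square statement. First I would rewrite the left-hand side using \eqref{eq1}, which says precisely that the first parenthesised factor equals $(\xi_n^{s,t})^2$:
\[
\tau_n^{s+1,t}\tau_n^{s,t}-\tau_{n+1}^{s,t}\tau_{n-1}^{s+1,t}=(\xi_n^{s,t})^2,
\]
and applying \eqref{eq1} with $t$ replaced by $t+1$ identifies the second factor with $(\xi_n^{s,t+1})^2$. Hence the whole left-hand side becomes $4(\xi_n^{s,t})^2(\xi_n^{s,t+1})^2=\bigl(2\xi_n^{s,t}\xi_n^{s,t+1}\bigr)^2$, so the theorem reduces to showing that the bracket $B$ on the right-hand side satisfies $B=\pm 2\xi_n^{s,t}\xi_n^{s,t+1}$.

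Next I would decompose $B$ into its ``diagonal'' and ``anti-diagonal'' pairs
\[
B=\underbrace{\bigl(\tau_n^{s+1,t+1}\tau_n^{s,t}-\tau_{n+1}^{s,t}\tau_{n-1}^{s+1,t+1}\bigr)}_{B_1}+\underbrace{\bigl(\tau_n^{s+1,t}\tau_n^{s,t+1}-\tau_{n+1}^{s,t+1}\tau_{n-1}^{s+1,t}\bigr)}_{B_2}.
\]
The term $B_1$ is exactly the left-hand side of \eqref{3.2b}, so $B_1=\xi_n^{s,t}\xi_n^{s,t+1}-\sigma_{n-1}^{s+1,t}\sigma_n^{s,t}$. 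The key remaining step is to establish the companion identity $B_2=\xi_n^{s,t}\xi_n^{s,t+1}+\sigma_{n-1}^{s+1,t}\sigma_n^{s,t}$, that is, the same bilinear relation as \eqref{3.2b} but with the opposite sign on the $\sigma\sigma$ cross term. Granting this, adding the two pieces yields $B=2\xi_n^{s,t}\xi_n^{s,t+1}$, and squaring reproduces the left-hand side computed above; the $\sigma$-terms and one power of $\xi$ are then eliminated exactly as the theorem demands.

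To obtain the companion identity I would rerun the derivation of \eqref{3.2b}: act with the point evaluation at $0$ and the linear functional $\mathcal{L}_{s,t}$ on the spectral transformations \eqref{trans2} and \eqref{spec1} to produce trilinear relations of the type \eqref{tri1}--\eqref{tri2}, then decompose them into bilinear pieces through the auxiliary determinant $\psi_n^{s,t}$ and the substitution \eqref{subs}, now choosing the grouping that isolates $B_2$. Here \eqref{3.2a} enters: together with its shift $s\mapsto s+1$ it expresses $(\sigma_n^{s,t})^2$ and $(\sigma_{n-1}^{s+1,t})^2$ in terms of the $\tau$'s, which pins down the magnitude of the cross term $\sigma_{n-1}^{s+1,t}\sigma_n^{s,t}$ and confirms that $B_1$ and $B_2$ differ precisely by $2\sigma_{n-1}^{s+1,t}\sigma_n^{s,t}$. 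I expect the main obstacle to be exactly this sign bookkeeping --- verifying that the cross term in $B_2$ appears with a plus sign so that it cancels the minus sign of $B_1$ upon addition. Getting the determinantal orientations and the direction of the \eqref{subs} substitution right is the delicate part; once the companion identity is secured, the remainder is the short algebraic collapse $B^2=4(\xi_n^{s,t}\xi_n^{s,t+1})^2$ described above.
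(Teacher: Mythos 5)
Your proposal is correct and is essentially the paper's own proof: the paper obtains \eqref{dckp} precisely by eliminating $\xi_n^{s,t}$ and $\sigma_n^{s,t}$ from \eqref{eq1}, \eqref{3.2a} and \eqref{3.2b}, which is exactly what your reduction of the left-hand side to $4(\xi_n^{s,t}\xi_n^{s,t+1})^2$ and your splitting $B=B_1+B_2$ spell out in detail. The companion identity you flag as the only delicate point follows with no sign bookkeeping from the polynomial identity
\begin{align*}
B_1B_2&=\left(\tau_n^{s,t}\tau_n^{s+1,t}-\tau_{n+1}^{s,t}\tau_{n-1}^{s+1,t}\right)\left(\tau_n^{s,t+1}\tau_n^{s+1,t+1}-\tau_{n+1}^{s,t+1}\tau_{n-1}^{s+1,t+1}\right)\\
&\quad-\left(\tau_n^{s+1,t+1}\tau_{n-1}^{s+1,t}-\tau_n^{s+1,t}\tau_{n-1}^{s+1,t+1}\right)\left(\tau_{n+1}^{s,t+1}\tau_n^{s,t}-\tau_{n+1}^{s,t}\tau_n^{s,t+1}\right),
\end{align*}
valid identically in the eight entries, whose right-hand side equals $(\xi_n^{s,t}\xi_n^{s,t+1})^2-(\sigma_{n-1}^{s+1,t}\sigma_n^{s,t})^2$ by \eqref{eq1} (at $t$ and $t+1$) and \eqref{3.2a} (at $(n,s,t)$ and at $(n-1,s+1,t)$), so dividing by $B_1=\xi_n^{s,t}\xi_n^{s,t+1}-\sigma_{n-1}^{s+1,t}\sigma_n^{s,t}$ (assumed nonzero, as the paper's elimination implicitly requires) gives $B_2=\xi_n^{s,t}\xi_n^{s,t+1}+\sigma_{n-1}^{s+1,t}\sigma_n^{s,t}$ with exactly the plus sign you needed.
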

This theorem implies that the discrete CKP equation admits the following multiple integral solution
\begin{align*}
\tau_n^{s,t}=\int_{0<x_1<\cdots<x_n<1\atop 0<y_1<\cdots<y_n<1}\prod_{i,j=1}^n \frac{x_i^sy_j^s}{x_i+y_j}\prod_{1\leq i<j\leq n}(x_i-x_j)^2(y_i-y_j)^2\prod_{i=1}^n \left(
\frac{1-x_i}{1+x_i}
\right)^t\left(
\frac{1-y_i}{1+y_i}
\right)^t
dx_idy_i.
\end{align*}
Moreover, from bilinear relations, we have the following corollary.
\begin{coro}
The discrete CKP equation \eqref{dckp} admits the following determinant solution
\begin{align*}
\tau_n^{s,t}=\det(m_{i,j}^{s,t})_{i,j=0}^{n-1},
\end{align*}
where 
\begin{align*}
m_{i,j}^{s+1,t}=m_{i+1,j+1}^{s,t},\quad m_{i,j}^{s,t+1}=m_{i,j}^{s,t}-\phi_i^{s,t}\phi_j^{s,t}
\end{align*}
and $\phi_i^{s+1,t}=\phi_{i+1}^{s,t}$.
\end{coro}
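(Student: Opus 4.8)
The plan is to show that the determinant $\tau_n^{s,t}=\det(m_{i,j}^{s,t})_{i,j=0}^{n-1}$, built from \emph{any} moments obeying the stated recurrences, automatically satisfies the three bilinear relations \eqref{eq1}, \eqref{3.2a} and \eqref{3.2b}; the discrete CKP equation \eqref{dckp} then follows from the very same elimination of $\xi_n^{s,t}$ and $\sigma_n^{s,t}$ carried out in the proof of the preceding theorem. The content of the corollary is precisely that this elimination, and the bilinear relations that feed it, never use the integral representation of the moments: they are formal determinant identities depending only on (i) the determinantal definitions of $\tau_n^{s,t}$, $\xi_n^{s,t}$ and $\sigma_n^{s,t}$, (ii) the two moment evolutions together with the symmetry $m_{i,j}^{s,t}=m_{j,i}^{s,t}$ (which the evolutions preserve once it holds for the initial data), and (iii) the auxiliary rule $\phi_i^{s+1,t}=\phi_{i+1}^{s,t}$.

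First I would realise every determinant occurring in the bilinear relations as a minor of a single master matrix. For \eqref{eq1} the master matrix is the $(n+1)\times(n+1)$ moment matrix $M=(m_{i,j}^{s,t})_{i,j=0}^{n}$: the shift $m_{i,j}^{s+1,t}=m_{i+1,j+1}^{s,t}$ identifies the minor obtained by deleting the first row and column with $\tau_n^{s+1,t}$ and the central minor with $\tau_{n-1}^{s+1,t}$, while deleting the last row and column gives $\tau_n^{s,t}$; symmetry makes the two remaining corner minors equal, each to $\xi_n^{s,t}$. The Desnanot--Jacobi (Dodgson condensation) identity applied to $M$ with respect to its first and last rows and columns is then exactly \eqref{eq1}.

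Next, for \eqref{3.2a} I would pass to the $(n+2)\times(n+2)$ symmetric bordered matrix obtained from $(m_{i,j}^{s,t})_{i,j=0}^n$ by adjoining the row and column $(\phi_0^{s,t},\dots,\phi_n^{s,t})$ and the top-left entry $1$. A Schur-complement computation turns the $t$-update $m_{i,j}^{s,t+1}=m_{i,j}^{s,t}-\phi_i^{s,t}\phi_j^{s,t}$ into the statement that deleting the $\phi$-border recovers $\tau^{s,t}$-minors whereas keeping it recovers $\tau^{s,t+1}$-minors, while the two off-diagonal corner minors are, up to a common sign $(-1)^n$, equal to $\sigma_n^{s,t}$. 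Desnanot--Jacobi on this bordered matrix yields \eqref{3.2a}, the squared $\sigma_n^{s,t}$ arising precisely because the two off-diagonal minors coincide by symmetry; careful sign bookkeeping is needed at this point.

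Finally, \eqref{3.2b} is the step I expect to be the main obstacle, since it couples the $s$-shift and the $t$-update at once and involves four distinct auxiliary determinants $\xi_n^{s,t+1}$, $\xi_n^{s,t}$, $\sigma_{n-1}^{s+1,t}$ and $\sigma_n^{s,t}$. Rather than build one master matrix for it directly, I would follow the route of the theorem's proof: apply the column substitution \eqref{subs}, which sends $\tau_n\mapsto\xi_n$ and $\sigma_n\mapsto\psi_n$, to the already-proved relation \eqref{3.2a} to obtain its $\xi$-analogue, and then combine this with \eqref{eq1} to eliminate the $\psi$'s. The role of the extra hypothesis $\phi_i^{s+1,t}=\phi_{i+1}^{s,t}$ is exactly to guarantee that the column shift $m_{i,j}^{s,t}\mapsto m_{i,j+1}^{s,t}$ commutes with the $s$-evolution of the bordered determinants, so that $\sigma_n^{s+1,t}$ and $\psi_n^{s,t}$ transform as required; verifying this compatibility, and tracking the signs through the substitution, is the delicate part. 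Once \eqref{eq1}, \eqref{3.2a} and \eqref{3.2b} are in hand as formal identities, the elimination of $\xi_n^{s,t}$ and $\sigma_n^{s,t}$ gives \eqref{dckp}, and the multiple-integral partition function $\tau_n^{s,t}$ is merely one instance of admissible initial data.
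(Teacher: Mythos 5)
Your reading of what the corollary actually asserts is the right one, and it is the same as the paper's: one only has to check that the bilinear relations feeding the elimination in the Theorem are \emph{formal} determinant identities, depending on nothing but the determinant definitions of $\tau_n^{s,t},\xi_n^{s,t},\sigma_n^{s,t}$, the two moment evolutions, the symmetry $m_{i,j}^{s,t}=m_{j,i}^{s,t}$ of the initial data (preserved by both evolutions), and the rule $\phi_i^{s+1,t}=\phi_{i+1}^{s,t}$. This is precisely what the paper supplies in the final Proposition of Section \ref{sec3}, where \eqref{e1}--\eqref{e4} are verified by Jacobi's determinant identity. Your treatment of \eqref{eq1} (Desnanot--Jacobi on the $(n+1)\times(n+1)$ moment matrix) and of \eqref{3.2a} (Desnanot--Jacobi on the symmetric $\phi$-bordered matrix) coincides with those checks, and your warning about signs is well taken: what Jacobi actually yields is \eqref{e2}, i.e.\ $\tau_{n+1}^{s,t+1}\tau_{n}^{s,t}-\tau_{n+1}^{s,t}\tau_{n}^{s,t+1}=-(\sigma_{n}^{s,t})^{2}$, so \eqref{3.2a} as printed in the paper has its two products interchanged.

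The genuine gap is your third step, where \eqref{3.2b} is to be obtained by applying the substitution \eqref{subs} to the already-proved \eqref{3.2a}. That substitution is not a legitimate move: your proof of \eqref{3.2a} uses the symmetry of the bordered matrix (this is exactly what makes the two opposite corner minors coincide and produces the perfect square), while the shifted array $(m_{i,j+1}^{s,t})$ is \emph{not} symmetric, so the substituted identity does not inherit a proof --- and in fact it is false. At $n=0$ the ``$\xi$-analogue'' of \eqref{3.2a} reads $\xi_{1}^{s,t+1}\xi_{0}^{s,t}-\xi_{0}^{s,t+1}\xi_{1}^{s,t}=(\psi_{0}^{s,t})^{2}$, i.e.\ $-\phi_{0}^{s,t}\phi_{1}^{s,t}=(\phi_{0}^{s,t})^{2}$, which fails for the Cauchy--Jacobi weight (both $\phi_{0}^{s,t}$ and $\phi_{1}^{s,t}$ are positive) and fails for generic symmetric data under either sign convention. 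What Desnanot--Jacobi really gives on the shifted bordered matrix is $\xi_{n+1}^{s,t+1}\xi_{n}^{s,t}-\xi_{n}^{s,t+1}\xi_{n+1}^{s,t}=-\psi_{n}^{s,t}\tilde{\psi}_{n}^{s,t}$, where $\tilde{\psi}_{n}^{s,t}$ is the \emph{other} corner minor, built from rows $0,\dots,n-1$ and columns $1,\dots,n+1$ of the moment array bordered by $(\phi_{1}^{s,t},\dots,\phi_{n+1}^{s,t})$; once symmetry is gone the two factors are genuinely different determinants. (The paper's own proof of the Proposition containing \eqref{3.2a}--\eqref{3.2b} leans on the same substitution, used in the reverse direction, and is open to the same objection; its sound replacement is the final Proposition.) To close the gap, prove the mixed relations directly rather than by substitution: writing $\tau_{n}^{s,t+1}$ as the symmetric bordered determinant, Jacobi's identity on rows $(1,2)$ and columns $(1,2)$ gives \eqref{e3}, where the hypothesis $\phi_{i}^{s+1,t}=\phi_{i+1}^{s,t}$ is exactly what identifies the inner minor with $\tau_{n-1}^{s+1,t+1}$, and Jacobi's identity on rows $(1,2)$ and columns $(1,n+1)$ gives \eqref{e4}. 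With \eqref{e1}--\eqref{e4} in hand as formal identities, the elimination of the auxiliary quantities produces \eqref{3.2b}, its companion with the opposite sign of $\sigma_{n-1}^{s+1,t}\sigma_{n}^{s,t}$, and then \eqref{dckp}, exactly as you intend in your final step.
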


In fact, there are some other simpler relations satisfied by these variables, whose compatibility condition would lead to the discrete CKP equation as well. 
\begin{proposition}
We have the following relations
\begin{subequations}
\begin{align}
&\tau_{n+1}^{s,t}\tau_{n-1}^{s+1,t}=\tau_n^{s,t}\tau_n^{s+1,t}-(\xi_n^{s,t})^2,\label{e1}\\
&\tau_n^{s,t+1}\tau_{n-1}^{s,t}=\tau_n^{s,t}\tau_{n-1}^{s,t+1}-(\sigma_{n-1}^{s,t})^2,\label{e2}\\
&\tau_n^{s,t+1}\tau_{n-1}^{s+1,t}=\tau_n^{s,t}\tau_{n-1}^{s+1,t+1}-(\psi_{n-1}^{s,t})^2,\label{e3}\\
&\tau_n^{s,t+1}\xi_{n-1}^{s,t}=\tau_n^{s,t}\xi_{n-1}^{s,t+1}-\psi_{n-1}^{s,t}\sigma_{n-1}^{s,t}.\label{e4}
\end{align}
\end{subequations}
\end{proposition}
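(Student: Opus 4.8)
The plan is to read all four identities as Desnanot--Jacobi (Dodgson condensation) relations for bordered Gram determinants, driven by the two moment evolutions. The two facts I would use repeatedly are: (i) the $s$-shift $m_{i,j}^{s+1,t}=m_{i+1,j+1}^{s,t}$ together with the symmetry $m_{i,j}^{s,t}=m_{j,i}^{s,t}$ of the Cauchy bilinear form, which let me recognize index-shifted minors of $(m_{i,j}^{s,t})$ as $\tau_{\bullet}^{s+1,t}$ and its off-diagonal corner minors as $\xi_{\bullet}^{s,t}$; and (ii) the $t$-shift $m_{i,j}^{s,t+1}=m_{i,j}^{s,t}-\phi_i^{s,t}\phi_j^{s,t}$ combined with the bordering (Schur complement) identity $\det(A-uv^\top)=\det\left(\begin{smallmatrix}1 & v^\top\\ u & A\end{smallmatrix}\right)$, which expresses $\tau_n^{s,t+1}$ and $\xi_n^{s,t+1}$ as $\phi$-bordered determinants and identifies $\sigma_{\bullet}^{s,t},\psi_{\bullet}^{s,t}$ with the corresponding minors carrying a single $\phi$-border. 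Relation \eqref{e1} is nothing but \eqref{eq1}, already established; alternatively I would re-derive it by condensing the $(n+1)\times(n+1)$ matrix $(m_{i,j}^{s,t})_{i,j=0}^{n}$ along its first and last rows and columns, reading the two corner minors as $\xi_n^{s,t}$ via the symmetry in (i).

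For \eqref{e2} and \eqref{e3} I would introduce the symmetric $(n+1)\times(n+1)$ bordered matrix
\[
\tilde N=\begin{pmatrix} (m_{i,j}^{s,t})_{i,j=0}^{n-1} & (\phi_i^{s,t})_{i=0}^{n-1}\\ (\phi_j^{s,t})_{j=0}^{n-1} & 1\end{pmatrix},
\]
whose Schur complement in the corner gives $\det\tilde N=\det(m_{i,j}^{s,t}-\phi_i^{s,t}\phi_j^{s,t})_{i,j=0}^{n-1}=\tau_n^{s,t+1}$. Condensing $\tilde N$ along the last $m$-line and the $\phi$-border yields \eqref{e2}: the doubly deleted minor is $\tau_{n-1}^{s,t}$, the minor keeping the border is again a $\phi$-bordered determinant equal to $\tau_{n-1}^{s,t+1}$, and the two off-diagonal minors both equal $\sigma_{n-1}^{s,t}$ after a transposition using $m_{i,j}^{s,t}=m_{j,i}^{s,t}$. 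Condensing the same $\tilde N$ along the first $m$-line and the $\phi$-border yields \eqref{e3}: now the corner minor is $\tau_{n-1}^{s+1,t}$, the minor keeping the border becomes, after a Schur complement and the shifts $m_{i,j}^{s+1,t}=m_{i+1,j+1}^{s,t}$ and $\phi_i^{s+1,t}=\phi_{i+1}^{s,t}$, exactly $\tau_{n-1}^{s+1,t+1}$, and the two off-diagonal minors both equal $\psi_{n-1}^{s,t}$. In both cases the square term, $(\sigma_{n-1}^{s,t})^2$ respectively $(\psi_{n-1}^{s,t})^2$, is forced by the symmetry of $\tilde N$, so that the signs come out exactly as stated.

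Finally, \eqref{e4} is the only relation producing a genuine cross product rather than a perfect square; it coincides, up to the sign bookkeeping of the condensation, with the decomposition \eqref{3.4b} of the trilinear identity \eqref{tri2} obtained earlier, so I would simply invoke that derivation. If one insists on a self-contained condensation proof, the right object is the column-shifted $\phi$-bordered matrix obtained from $\tilde N$ by the substitution \eqref{subs} on the $m$-columns, condensed along one border line and one interior line so that the two surviving off-diagonal minors are of the two \emph{different} types $\psi_{n-1}^{s,t}$ and $\sigma_{n-1}^{s,t}$. I expect the only real difficulty to be the bookkeeping: correctly matching each condensation minor to the intended $\tau/\xi/\sigma/\psi$ after the index shifts, transpositions, and Schur complements, and in particular tracking the sign in \eqref{e4}, where the asymmetry of the deletion breaks the square and makes the sign delicate --- which is precisely why lifting it from \eqref{3.4b} is cleaner than re-condensing from scratch.
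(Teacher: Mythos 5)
Your treatment of \eqref{e1}, \eqref{e2} and \eqref{e3} is correct and is essentially the paper's own proof: the paper also verifies \eqref{e1} by Jacobi's identity applied to $\tau_{n+1}^{s,t}$ (first/last rows and columns), and \eqref{e2}, \eqref{e3} by Jacobi's identity applied to the $\phi$-bordered determinant representation of $\tau_n^{s,t+1}$, deleting the border line together with the last, respectively the first, interior line --- exactly your condensations of $\tilde N$.

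The gap is \eqref{e4}, where neither of your two routes goes through. Your primary route, ``invoke \eqref{3.4b}'', cannot work: \eqref{3.4b} and \eqref{e4} contain the same four products but with \emph{opposite} signs on the cross term. Rearranged, \eqref{3.4b} reads $\tau_n^{s,t+1}\xi_{n-1}^{s,t}=\tau_n^{s,t}\xi_{n-1}^{s,t+1}+\sigma_{n-1}^{s,t}\psi_{n-1}^{s,t}$, so the two statements are mutually inconsistent unless $\sigma_{n-1}^{s,t}\psi_{n-1}^{s,t}$ vanishes, which it does not: at $n=1$, with the conventions $\tau_0^{s,t}=\xi_0^{s,t}=1$, one has $\sigma_0^{s,t}=\psi_0^{s,t}=\phi_0^{s,t}>0$ and $\tau_1^{s,t+1}=m_{0,0}^{s,t}-(\phi_0^{s,t})^2$, which satisfies \eqref{e4} and violates \eqref{3.4b}. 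Thus \eqref{3.4b} as printed carries a sign error (note the paper itself cites \eqref{3.4a}, not \eqref{3.4b}, when listing earlier derivations of \eqref{e4}), and ``lifting'' \eqref{e4} from it proves the wrong sign. Since the sign is the only nontrivial content beyond the condensation pattern, this cannot be dismissed as bookkeeping.

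Your fallback route is also concretely wrong. The ``column-shifted'' bordered matrix obtained by applying \eqref{subs} to the $m$-columns of $\tilde N$ contains only the columns $m_{\cdot,1},\dots,m_{\cdot,n}$, so neither $\sigma_{n-1}^{s,t}$ (which requires the column $m_{\cdot,0}$) nor $\tau_n^{s,t+1}$ (which must appear as the \emph{full} determinant, being the largest object in \eqref{e4}) can arise among its condensation minors. The correct object is the \emph{unshifted} bordered matrix you already used for \eqref{e2}--\eqref{e3}, condensed asymmetrically: with the border placed first, delete rows $(1,2)$ (border row and first interior row) and columns $(1,n+1)$ (border column and last interior column); this is exactly what the paper does. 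Then $D_1^1=\tau_n^{s,t}$, $D_2^{n+1}=\xi_{n-1}^{s,t+1}$ (Schur complement plus the symmetry $m_{i,j}^{s,t}=m_{j,i}^{s,t}$), $D_1^{n+1}=(-1)^{n-1}\sigma_{n-1}^{s,t}$, $D_2^{1}=(-1)^{n-1}\psi_{n-1}^{s,t}$, $D_{1,2}^{1,n+1}=\xi_{n-1}^{s,t}$, and Jacobi's identity $D\,D_{1,2}^{1,n+1}=D_1^1D_2^{n+1}-D_1^{n+1}D_2^{1}$ yields \eqref{e4}; the two factors $(-1)^{n-1}$ cancel in the product, so the minus sign comes out unambiguously.
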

\begin{proof}
It should be noted that \eqref{e1}, \eqref{e2} and \eqref{e4} have been derived from spectral transformation of deformed Cauchy bi-orthogonal polynomials in \eqref{eq1}, \eqref{3.2a} and \eqref{3.4a}. However, they can be directly verified by Jacobi's determinant identity. The first equation could be verified by using Jacobi's identity for the first and last rows and columns to the determinant representation of $\tau_{n+1}^{s,t}$, and the second (respectively, the third and the last) equations are verified by using Jacobi's identities for the $(1,n+1)$-rows and $(1,n+1)$-columns (respectively, the $(1,2)$-rows and $(1,2)$-columns, and $(1,2)$-rows and $(1,n+1)$-columns) to the determinant $\tau_n^{s,t+1}$.
\end{proof}
\begin{remark}
Despite of using determinant identities, these relations could be verified by using Pfaffians as well. In \cite{li20}, we give some similar formulas for $\tau_n^{s,t}$ with different kinds of discrete evolutions. These equations are equivalent under an affine transformation of index $n,\,s$ and $t$. 
\end{remark}

We should give some remarks at the end of this part. Firstly, regarding with the equation \eqref{dckp}, 
it was firstly found by Kashaev from the star-triangle transformation in the Ising model based on the idea of nonlocal Yang-Baxter relation \cite{kashaev95}. Later on, Schief found the same equation  from the perspective of B\"acklund transformation, and viewed it as a superposition formula for the continuous CKP equation \cite{schief03}. Some geometric explanations of such an equation were given in \cite{doliwa10,bobenko15}.  Besides, a soliton solution of discrete CKP equation was given by Fu and Nijhoff by direct linearization \cite{fu17}, and general solutions were constructed by a fermion approach \cite{harnad23}. 
In recent years, the combinatoric explanation for the discrete CKP equation was also considered, as analogy of discrete KP (Hirota-Miwa) equation and discrete BKP (Miwa) equation. In the paper of Kenyon and Pemantle \cite{kenyon15}, the authors showed that discrete CKP equation was a special case of hexahedron recurrence, shown to have Laurent property by using a superurban renewal. 
 In fact, equations \eqref{e1}-\eqref{e3} has also been obtained \cite[Prop. 6.6]{kenyon15} by interpreting the $\tau$-function $\tau_n^{s,t}$ as vertex variable and auxiliary functions $\sigma_n^{s,t}$, $\xi_n^{s,t}$ and $\psi_n^{s,t}$ as face variables.

\section{Concluding remarks}
In this paper, we derived the discrete CKP equation by using deformed Cauchy orthogonal polynomials. By introducing a deformed Jacobi weight, we demonstrated the special Cauchy orthogonal polynomials satisfied two different spectral transformations, which are compatible with four-term recurrence relation satisfied by general Cauchy bi-orthogonal polynomials. Therefore, we obtain some algebraic systems from Lax pairs. It is our belief that these complicated nonlinear system should have a simpler form, since there is a simple $\tau$-function expressed by determinant. Then we use bilinear equations to simplify the above-mentioned algebraic equations, and find the equation governed by a single $\tau$-function. This is nothing but a discrete CKP equation. 

However, it is still of interest to know more about the Jacobi-deformed Cauchy orthogonal polynomials and related matrix model. Although we have the multiple integral representation for the normalization and polynomials, we'd like to know their exact expressions in terms of hypergeometric function and their limiting behavior when the degree of polynomials tend to infinity. 
\section*{Acknowledgement}
This work is partially funded by grants (Grant No. NSFC 12371251, 12175155). G.F. Yu was also partially supported by the National Key Research and Development Program of China (Grant No. 2024YFA1014101), Shanghai Frontier Research Institute for Modern Analysis and the Fundamental Research Funds for the Central Universities.

\section*{Data Availability Statement}
There is no data needed in this research.

\end{document}